\documentclass[submission,copyright,creativecommons]{eptcs}

\usepackage{iftex}

\ifpdf
  \usepackage{underscore}         
  \usepackage[T1]{fontenc}        
\else
  \usepackage{breakurl}           
\fi

\usepackage{enumerate,xspace}
\usepackage{amsmath,amssymb,wasysym}
\usepackage[all]{xy}
\usepackage{proof}
\usepackage[svgnames]{xcolor}
\usepackage{tikz}
\usepackage[sort,nocompress]{cite}
\usepackage{stmaryrd} 
\usepackage{mathtools}
\usepackage{latexsym}
\usepackage{dsfont}
\usepackage{multicol}
\usepackage{cmll}
\usepackage{multirow}
\usepackage{longtable}

\newtheorem{observation}{Remark}[section]
\newtheorem{lemma}[observation]{Lemma} 
\newtheorem{theorem}[observation]{Theorem}
\newtheorem{definition}[observation]{Definition}
\newtheorem{example}[observation]{Example}

\newtheorem{corollary}[observation]{Corollary}

\title{Reverse Faà di Bruno's Formula\\ for Cartesian Reverse Differential Categories}
\author{
Aaron Biggin\footnote{For work on this research, this author was funded by a Summer Vacation Research Scholarship from Macquarie University.} \qquad\qquad Jean-Simon Pacaud Lemay\footnote{This material is based upon work supported by the AFOSR under award number FA9550-24-1-0008. This author is also funded by an ARC DECRA (DE230100303).}
\institute{Macquarie University\\
Sydney, New South Wales, Australia}
\email{\quad aaron.biggin@students.mq.edu.au \quad\qquad js.lemay@mq.edu.au}
}

\begin{document}
\maketitle
\allowdisplaybreaks

\begin{abstract}
Reverse differentiation is an essential operation for automatic differentiation. Cartesian reverse differential categories axiomatize reverse differentiation in a categorical framework, where one of the primary axioms is the reverse chain rule, which is the formula that expresses the reverse derivative of a composition. Here, we present the reverse differential analogue of Faà di Bruno's Formula, which gives a higher-order reverse chain rule in a Cartesian reverse differential category. To properly do so, we also define partial reverse derivatives and higher-order reverse derivatives in a Cartesian reverse differential category. 
\end{abstract}

\section{Introduction}

There are two types of derivative operations used in automatic differentiation: \textbf{forward differentiation} and the \textbf{reverse differentiation} (also referred to as forward mode and reverse mode differentiation respectively). For a smooth function $F: \mathbb{R}^n \to \mathbb{R}^m$, its forward derivative is the usual total derivative, which is of type $\mathsf{D}[F]: \mathbb{R}^n \times \mathbb{R}^n \to \mathbb{R}^m$ and can be defined using the Jacobian of $F$. On the other hand, its reverse derivative is of type $\mathsf{R}[F]: \mathbb{R}^n \times \mathbb{R}^m \to \mathbb{R}^n$, which can instead be defined in terms of the transpose of the Jacobian of $F$. As such, the forward derivative and the reverse derivative are the transposes of each other. Forward differentiation is more efficient when the dimension of the output is much larger than the dimension of the input, $n \leq m$. In comparison, reverse differentiation is more efficient when the dimension of the input is much larger than the dimension of the output, $n \geq m$. Since many optimization algorithms, especially those used in machine learning, deal with large input data sets that are characterized by functions of type $\mathbb{R}^n \to \mathbb{R}$ where $n$ is quite large, it is reverse mode differentiation which is often more practical in these setting. Motivated by this, there has been significant interest in studying the categorical foundations of reverse differentiation \cite{cockett2019reverse,cruttwell2022monoidal,reverseascent,reversesemantics,cruttwelletal:LIPIcs.CSL.2024.21,Lemaygradient}, which fits in the larger program of providing the categorical foundations for machine learning \cite{vakar2022chad,catML,cruttwell2019towards,gavranovic2024categorical,wilson2022categories,fong2019backprop}. 

Cartesian differential categories \cite{blute2009cartesian} provide the categorical foundations of forward differentiation, as well as the categorical semantics of the differential $\lambda$-calculus. In particular, a Cartesian differential category comes equipped with a \textbf{forward differential combinator} (Sec \ref{sec:forward}), which is an operator that captures forward differentiation of maps. The axioms of a forward differential combinator are analogues of the fundamental identities of the total derivative from differential calculus, such as the famous chain rule for the forward derivative of a composition. On the other hand, Cartesian reverse differential categories provide the categorical foundations of reverse differentiation. This time, a Cartesian reverse differential category comes equipped with a \textbf{reverse differential combinator} (Def \ref{def:CRDC}), which is an operator that captures reverse differentiation of maps. The axioms of the reverse differential combinator are the reverse differential counterparts of the axioms of the forward differential combinator. In particular, one of the axioms is the reverse chain rule for the reverse derivative of a composition. Every reverse differential combinator induces a forward differential combinator as well as a transpose operation, called a \textbf{contextual linear dagger} (Sec \ref{sec:forward}), such that the reverse derivative and forward derivative are transposes of one another \cite[Thm 42]{cockett2019reverse}. As such, every Cartesian reverse differential category is also a Cartesian differential category \cite[Thm 16]{cockett2019reverse}. Therefore, there is interest in studying and developing the reverse differential counterparts of well-known forward differential concepts in a Cartesian reverse differential category. 

Famously, \textbf{Faà di Bruno's Formula} provides a higher-order chain rule for the formula of a higher-order forward derivative of a composition. Faà di Bruno's Formula also holds in a Cartesian differential category \cite[Lemma 3.4]{garner2020cartesian}. The main objective of this paper is to provide the reverse differential version of Faà di Bruno's Formula, giving a higher-order reverse chain rule. As we will see, the reverse Faà di Bruno's Formula surprisingly involves both reverse derivatives and forward derivatives. To properly give the reverse Faà di Bruno's Formula, we will first need to develop the appropriate notions of \textbf{partial reverse derivatives} (Def \ref{def:partialreverse}) and \textbf{higher-order reverse derivatives} (Def \ref{def:higherreverse}) in a Cartesian reverse differential category, in such a way that they are the transposes of their respective forward derivative counterparts. So, we will show that the transpose of the partial forward derivative is indeed the partial reverse derivative (Lemma \ref{lemma:partialtranspose}). However, to also get that the transpose of the higher order forward derivative is the higher order reverse derivative, we introduce an extra compatibility relation between the reverse differential combinator and the forward differential combinator, which we call the \textbf{stable rule} (Def \ref{def:stable}). This additional rule is not very imposing and quite natural, and holds in prominent examples of Cartesian reverse differential categories (Ex \ref{ex:stablesmooth}). Then, in the same way that the reverse chain rule can be computed by taking the transpose of the chain rule, by taking the transpose of Faà di Bruno's Formula, we can compute the reverse Faà di Bruno's Formula (Thm \ref{Prop:faa}). 

To the best of the authors' knowledge, this is the first expression of Faà di Bruno's Formula for reverse mode differentiation. The reverse Faà di Bruno's Formula will hopefully help improve calculations in (categorical) models for automatic differentiation and machine learning. Moreover, an important application of Faà di Bruno's Formula for Cartesian differential categories is that it is a crucial formula in the construction of \emph{cofree} Cartesian differential categories \cite{garner2020cartesian,cockett2011faa}. Therefore, in future work, it would be interesting to understand if the reverse Faà di Bruno's Formula can be used somehow to construct \emph{cofree} Cartesian reverse differential categories. 

 \textbf{Conventions:} In an arbitrary category $\mathbb{X}$, objects will be denoted by capital letters $A,B,C$, etc. and maps by miniscule letters $f,g,h$, etc. We denote hom-sets by $\mathbb{X}(A,B)$, maps as arrows ${f: A \to B}$, identity maps as $1_A: A \to A$, and we use the classical notation for composition $\circ$, as opposed to diagrammatic order that was used in other papers on Cartesian (reverse) differential categories, such as in \cite{blute2009cartesian,cockett2019reverse}. For a category with finite products, we denote the product by $\times$, the projection maps by ${\pi_j: A_1 \times \hdots \times A_n \to A_j}$, and the pairing operation as $\langle -, \hdots, - \rangle$. 

\section{Reverse Differentiation}

In this section, we review Cartesian reverse differential categories and, in particular, also develop the notion of \emph{partial} reverse derivatives. Partial reverse differentiation is very practical and provides a helpful new perspective on the axioms of reverse differentiation and, as we will see in the next section, defining forward differentiation and linear transpose operation from reverse differentiation.

Here, following what was done for forward differentiation in \cite{garner2020cartesian}, we will introduce the definition of a Cartesian reverse differential category relative to a fixed commutative semiring $k$, slightly generalising the original definition from \cite{cockett2019reverse}. As such, in this relative setting, the underlying structure of a Cartesian reverse differential category is that of a \emph{Cartesian left $k$-linear category} \cite[Sec 2.1]{garner2020cartesian}, which can be described as a category with finite products which is \emph{skew}-enriched over the category of $k$-modules and $k$-linear maps between them \cite{garner2020cartesian}. Essentially, this means that each hom-set is a $k$-module, so we have zero maps, can take the sum of maps, and can scalar multiply maps by elements of $k$, but also allow for maps which do not preserve zeroes, sums, or scalar multiplication. Maps which do preserve the module structure are called \emph{$k$-linear maps}. Explicitly, a \textbf{left $k$-linear category} is a category $\mathbb{X}$ such that each hom-set $\mathbb{X}(A,B)$ is a $k$-module with scalar multiplication $\cdot : k \times  \mathbb{X}(A,B) \to  \mathbb{X}(A,B)$, addition ${+: \mathbb{X}(A,B) \times \mathbb{X}(A,B) \to \mathbb{X}(A,B)}$, and zero $0 \in \mathbb{X}(A,B)$; and such that pre-composition preserves the $k$-linear structure: $(s \cdot f + t \cdot g) \circ x = s \cdot (f \circ x) + t \cdot (g \circ x)$. A map $f: A\to B$ is said to be \textbf{$k$-linear} if post-composition by $f$ preserves the $k$-linear structure: $f \circ (s \cdot x + t \cdot y) =   s \cdot (f \circ x) + t \cdot (f \circ y)$. A \textbf{Cartesian left $k$-linear category} is a left $k$-linear category $\mathbb{X}$ such that $\mathbb{X}$ has finite products and all projection maps $\pi_j$ are $k$-linear. We note that when taking $k=\mathbb{N}$, the semiring of natural numbers, (Cartesian) left $\mathbb{N}$-linear categories and their $\mathbb{N}$-linear maps are the same thing as (Cartesian) left additive categories and their additive maps from \cite[Def 1.1.1 \& 1.2.1]{blute2009cartesian}. 

A Cartesian reverse differential category is a Cartesian $k$-linear category which comes equipped with a \emph{reverse differential combinator}, which is an operator that sends maps to their reverse derivative. In this relative to $k$ setting, the reverse differential combinator still satisfies essentially the same seven axioms as in \cite[Def 13]{cockett2019reverse}, but where we upgrade the first two \textbf{[RD.1]} and \textbf{[RD.2]} from simple additivity to $k$-linearity in the obvious way (so \textbf{[RD.3]} to \textbf{[RD.7]} are exactly the same). So, in particular, when taking $k= \mathbb{N}$, a Cartesian $\mathbb{N}$-reverse differential category is precisely the same thing as the original definition of a Cartesian reverse differential category from \cite[Def 13]{cockett2019reverse}. 

\begin{definition}\label{def:CRDC} \cite[Def 13]{cockett2019reverse} A \textbf{Cartesian $k$-reverse differential category} is a Cartesian left $k$-linear category that comes equipped with a \textbf{reverse differential combinator} $\mathsf{R}$, that is, a family of functions $\mathsf{R}: \mathbb{X}(A , B) \to \mathbb{X}(A \times B, A)$, satisfying the seven axioms \textbf{[RD.1]} to \textbf{[RD.7]} described below. For a map $f: A \to B$, the map $\mathsf{R}[f]: A \times B \to A$ is called the \textbf{reverse derivative} of $f$. 
\end{definition}

Before we review the axioms of the reverse differential combinator, it may be helpful to first quickly review the canonical example of a Cartesian reverse differential category: 

\begin{example}\label{ex:smooth} Let $\mathbb{R}$ be the set of real numbers. Define $\mathsf{SMOOTH}$ as the category whose objects are the Euclidean real vector spaces $\mathbb{R}^n$ and whose maps are real smooth functions ${F: \mathbb{R}^n \to \mathbb{R}^m}$ between them. $\mathsf{SMOOTH}$ is a Cartesian $\mathbb{R}$-reverse differential category where the reverse differential combinator is defined using the \emph{transpose} of the Jacobian, see \cite[Ex 14.2]{cockett2019reverse} and \cite[Ex 4.2]{Lemaygradient}. Explicitly, for a smooth function $F: \mathbb{R}^n \to \mathbb{R}^m$ (where recall that $F$ is in fact a tuple $F = \langle f_1, \hdots, f_m \rangle$ of smooth functions $f_j:  \mathbb{R}^n \to \mathbb{R}$), its reverse derivative $\mathsf{R}[F]: \mathbb{R}^n \times \mathbb{R}^m \to \mathbb{R}^n$ is defined as follows: 
\begin{align*}
     \mathsf{R}[F](\vec x, \vec y) := \left(\sum \limits^m_{j=1} \frac{\partial f_j}{\partial x_1}(\vec x) y_j, \hdots, \sum \limits^m_{j=1} \frac{\partial f_j}{\partial x_n}(\vec x) y_j \right) 
\end{align*}
Other examples of Cartesian reverse differential categories can be found in \cite{cockett2019reverse,cruttwell2022monoidal}, which include any dagger category with dagger biproducts \cite[Ex 19]{cockett2019reverse}, polynomials over a fixed semiring \cite[Ex 14.1]{cockett2019reverse}, and the coKleisli category of a monoidal reverse differential category \cite[Thm 55]{cruttwell2022monoidal}.
\end{example}

So what are the seven axioms \textbf{[RD.1]} to \textbf{[RD.7]} of a reverse differential combinator? Simply writing down the equations of \textbf{[RD.1]} to \textbf{[RD.7]} as in \cite[Def 13]{cockett2019reverse} can be a bit non-intuitive to parse at first glance. Especially the last two axioms \textbf{[RD.6]} and \textbf{[RD.7]} are a bit complicated to write down and somewhat intimidating at first. To aid in readability, we will use a term calculus for reverse differentiation inspired by the one for forward differentiation \cite[Sec 4]{blute2009cartesian}. In this term calculus, the reverse derivative $\mathsf{R}[f](a, b)$ will be written as:
\[\frac{\mathsf{r} f(x)}{\mathsf{r}x}(a) \cdot b\] 
We leave formalising this term calculus, including proving soundness and completeness, for future work. So, using this term calculus, the first five axioms of the reverse differential combinator are: 
\begin{enumerate}[{\bf [RD.1]}]
\item The reverse differential combinator is a $k$-linear morphism:
\[\dfrac{\mathsf{r} \left(s \cdot f(x) + t \cdot g(x) \right)}{\mathsf{r}x}(a) \cdot b = s\cdot \dfrac{\mathsf{r}f(x)}{\mathsf{r}x}(a) \cdot b + t \cdot \dfrac{\mathsf{r}g(x)}{\mathsf{r}x}(a) \cdot b\] 
\item Reverse derivatives are $k$-linear in their second argument:
\[\dfrac{\mathsf{r}f(x)}{\mathsf{r}x}(a) \cdot (s\cdot b + t \cdot c) = s\cdot \dfrac{\mathsf{r}f(x)}{\mathsf{r}x}(a) \cdot b + t \cdot \dfrac{\mathsf{r}f(x)}{\mathsf{r}x}(a) \cdot c\]
\item The reverse derivative of identities and projections are: 
\begin{align*} \dfrac{\mathsf{r}x}{\mathsf{r}x}(a) \cdot b = b && \dfrac{\mathsf{r}x_j}{\mathsf{r}(x_1, \hdots, x_n)}(a_0, \hdots, a_n) \cdot b = \underbrace{(0, \hdots, 0, b, 0, \hdots, 0)}_{\text{$j$th-component}} 
\end{align*} 
\item The reverse derivative of a tuple is the sum of the reverse derivatives of the components of the tuple:
\[\dfrac{\mathsf{r}\left \langle f_1(x), \hdots, f_n(x) \right \rangle}{\mathsf{r}x}(a) \cdot (b_1, \hdots, b_n) = \sum \limits^n_{j=1} \dfrac{\mathsf{r}f_j(x)}{\mathsf{r}x}(a) \cdot b_j\]
\item The reverse chain rule formula for the reverse derivative of a composition: 
\[\dfrac{\mathsf{r}g\left(f(x) \right)}{\mathsf{r}x}(a) \cdot b = \dfrac{\mathsf{r}f(x)}{\mathsf{r}x}(a) \cdot \left( \dfrac{\mathsf{r}g(y)}{\mathsf{r}y}(f(a)) \cdot b \right)\] 
\end{enumerate}

To help write the last two axioms \textbf{[RD.6]} and \textbf{[RD.7]}, it will be useful to first introduce the concept of \emph{partial} reverse derivatives. So, given a map of type $f: A_1 \times \hdots \times A_n \to B$, we'd like to be able to take the reverse derivative of $f$ with respect to the component $A_j$ while keeping the rest constant. To do so, consider first the total reverse derivative of $f$, which is of type $\mathsf{R}[f]: A_1 \times \hdots \times A_n \times B \to A_1 \times \hdots \times A_n$. By the universal property of the product, we know that in this case, $\mathsf{R}[f]$ is actually a tuple of maps of type $A_1 \times \hdots \times A_n \times B \to A_j$ which are, of course, defined by post-composing $\mathsf{R}[f]$ with the respective projections. We interpret these maps as the partial reverse derivatives of $f$.  

\begin{definition}\label{def:partialreverse} In a Cartesian $k$-reverse differential category, for a map $f: A_1 \times \hdots \times A_n \to B$, its \textbf{$j$-th partial reverse derivative} is the map $\mathsf{R}_j[f]: A_1 \times \hdots \times A_n \times B \to A_j$ defined as the following composite:
\begin{align}
 \xymatrixcolsep{5pc}\xymatrix{   \mathsf{R}_j[f]: A_1 \times \hdots \times A_n \times B \ar[rr]^-{\mathsf{R}[f]}  &&  A_1 \times \hdots \times A_n \ar[r]^-{\pi_j} & A_j } 
\end{align}
\end{definition}

In the term calculus, we write partial reverse derivatives as follows: 
\[ \dfrac{\mathsf{r}f(a_1, \hdots, a_{j-1}, x_j, a_{j+1}, \hdots, a_n)}{\mathsf{r}x_j}(a_j) \cdot b ~\colon \!\!  \!\! \!= \pi_j\left(\dfrac{\mathsf{r}f(x_1, \hdots, x_n)}{\mathsf{r}(x_1, \hdots, x_n)}(a_1, \hdots, a_n) \cdot b \right) \]

\begin{example} For a smooth function $F=\langle f_1, \hdots, f_m \rangle: \mathbb{R}^n \to \mathbb{R}^m$, its $j$-th partial reverse derivative $\mathsf{R}_j[F]: \mathbb{R}^n \times \mathbb{R}^m \to \mathbb{R}$ is the sum of the partial derivatives in the $x_j$ variable: 
\begin{align}
\mathsf{R}_j[F](\vec x, \vec y) =  \sum \limits^m_{k=1} \dfrac{\partial f_k}{\partial x_j}(\vec x) y_k
\end{align}
 \end{example}

Recall that the reverse derivative of a smooth function was the tuple of the partial derivatives. The same idea holds in a Cartesian reverse differential category, which is, of course, immediate from the universal property of the product. We note that this is the reverse differential analogue of the fact that the total forward derivative is equal to the sum of the partial forward derivatives \cite[Lemma 2.8.(i)]{garner2020cartesian}. 

\begin{lemma}\label{lemma:tuple} In a Cartesian $k$-reverse differential category, for a map $f: A_1 \times \hdots \times A_n \to B$, we have that $\mathsf{R}[f] = \langle \mathsf{R}_1[f], \hdots, \mathsf{R}_n[f] \rangle$, which in the term calculus is expressed as:
\begin{align} \label{Ri-tuple} 
\dfrac{\mathsf{r}f(x_1, \hdots, x_n)}{\mathsf{r}(x_1, \hdots, x_n)}(a_1, \hdots, a_n) \cdot b =\left \langle  \dfrac{\mathsf{r}f(x_1, a_2, \hdots, a_n)}{\mathsf{r}x_1}(a_1) \cdot b, \hdots,  \dfrac{\mathsf{r}f(a_1, \hdots, x_n)}{\mathsf{r}x_n}(a_n) \cdot b \right \rangle 
\end{align}
\end{lemma}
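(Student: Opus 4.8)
The plan is to note that this lemma is an immediate consequence of the universal property of the finite product, and essentially nothing else is needed. Recall that in any category with finite products, every map $g \colon C \to A_1 \times \hdots \times A_n$ satisfies $g = \langle \pi_1 \circ g, \hdots, \pi_n \circ g \rangle$, since both sides have the same composites with each projection. First I would apply this observation to the map $g = \mathsf{R}[f] \colon A_1 \times \hdots \times A_n \times B \to A_1 \times \hdots \times A_n$, obtaining $\mathsf{R}[f] = \langle \pi_1 \circ \mathsf{R}[f], \hdots, \pi_n \circ \mathsf{R}[f] \rangle$. Then I would simply rewrite each component $\pi_j \circ \mathsf{R}[f]$ using Definition \ref{def:partialreverse}, which defines $\mathsf{R}_j[f]$ to be precisely this composite, yielding $\mathsf{R}[f] = \langle \mathsf{R}_1[f], \hdots, \mathsf{R}_n[f] \rangle$ as claimed.

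To finish, I would unwind the term-calculus notation to confirm that this categorical identity is exactly what \eqref{Ri-tuple} asserts: the left-hand side of \eqref{Ri-tuple} is the notation for $\mathsf{R}[f]$ applied to $(a_1, \hdots, a_n, b)$, while the $j$-th entry of the tuple on the right-hand side is, by the term-calculus convention for partial reverse derivatives stated just after Definition \ref{def:partialreverse}, exactly $\pi_j\!\left(\mathsf{R}[f](a_1,\hdots,a_n,b)\right) = \mathsf{R}_j[f](a_1,\hdots,a_n,b)$. I do not anticipate any genuine obstacle here; the only mild subtlety is purely notational — keeping track of which variables are "held constant" in the term-calculus rendering of $\mathsf{R}_j[f]$ — and this is resolved entirely by the definitions already given. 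If desired, I would also remark that this is the reverse-differential mirror of \cite[Lemma 2.8.(i)]{garner2020cartesian}, as already noted in the text preceding the statement.
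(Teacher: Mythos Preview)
Your proposal is correct and matches the paper's own approach: the text preceding the lemma explicitly notes that the statement is ``immediate from the universal property of the product,'' and no further proof is given. Your argument spells out exactly this observation together with the definition of $\mathsf{R}_j[f]$, so nothing more is needed.
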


Now that we have partial reverse derivatives, we can smoothly write down the remaining two axioms: 
\begin{enumerate}[{\bf [RD.1]}]
\setcounter{enumi}{5}
\item Involution of the \emph{linear transpose}:
\[\dfrac{\mathsf{r}\dfrac{\mathsf{r}\dfrac{\mathsf{r}f(x)}{\mathsf{r}x}(a) \cdot u}{\mathsf{r}u}(0) \cdot v}{\mathsf{r}v}(0) \cdot b = \dfrac{\mathsf{r}f(x)}{\mathsf{r}x}(a) \cdot b\]
\item Symmetry of the mixed partial \emph{forward} derivatives: 
\[\dfrac{\mathsf{r} \dfrac{\mathsf{r}\dfrac{\mathsf{r} \dfrac{\mathsf{r}f(x)}{\mathsf{r}x}(y) \cdot u}{\mathsf{r}u}(0) \cdot b}{\mathsf{r}y}(a) \cdot v}{\mathsf{r}v}(0) \cdot c = \dfrac{\mathsf{r} \dfrac{\mathsf{r}\dfrac{\mathsf{r} \dfrac{\mathsf{r}f(x)}{\mathsf{r}x}(y) \cdot u}{\mathsf{r}u}(0) \cdot c}{\mathsf{r}y}(a) \cdot v}{\mathsf{r}v}(0) \cdot b\] 
\end{enumerate}
While these axioms may still look somewhat strange initially, they will make more sense when considering the relationship between reverse differentiation and forward differentiation -- which we discuss in the next section below. For a more in-depth discussion on the seven reverse differential combinator axioms (and Cartesian reverse differential categories in general), we invite the reader to see \cite{cockett2019reverse,cruttwell2022monoidal}. 

Another way of describing partial reverse differentiation is as reverse differentiation in context. From this point of view, partial reverse differentiation is an actual reverse differential combinator for simple slice categories. Recall that for a category $\mathbb{X}$ with finite products, for each object $C \in \mathbb{X}$, the \textbf{simple slice category} over $C$ is the category $\mathbb{X}[C]$ whose objects are the same as $\mathbb{X}$ and whose homsets are $\mathbb{X}[C](A,B) = \mathbb{X}(C \times A, B)$. Composition of $f: C \times A \to B$ and $g: C \times B \to D$ is $g \circ \langle \pi_1, f \rangle$, which is written in the term calculus as $g(c,f(c,x))$, and the identity map is the projection $\pi_2: C \times A \to A$. $\mathbb{X}[C]$ also has finite products, and if $\mathbb{X}$ is a Cartesian left $k$-linear category, then $\mathbb{X}[C]$ will also be a Cartesian $k$-linear category with the same $k$-linear structure as $\mathbb{X}$. Showing that if $\mathbb{X}$ is a Cartesian $k$-reverse differential category then so is $\mathbb{X}[C]$ amounts to showing that partial reverse differentiation satisfies the seven reverse differential combinator axioms in \emph{context}.

\begin{lemma}\label{lem:R17context} In a Cartesian $k$-reverse differential category, the following equalities hold: 
\begin{enumerate}[{\bf [RD.1]}]
\item $\dfrac{\mathsf{r} \left(s \cdot f(c_1,x,c_2) + t \cdot g(c_1,x,c_2)\right)}{\mathsf{r}x}(a) \cdot b = s\cdot \dfrac{\mathsf{r}f(c_1,x,c_2)}{\mathsf{r}x}(a) \cdot b + t \cdot \dfrac{\mathsf{r}g(c_1,x,c_2)}{\mathsf{r}x}(a) \cdot b$
\item $\dfrac{\mathsf{r}f(c_1,x,c_2)}{\mathsf{r}x}(a) \cdot (s\cdot b + t \cdot c) = s\cdot \dfrac{\mathsf{r}f(c_1,x,c_2)}{\mathsf{r}x}(a) \cdot b + t \cdot \dfrac{\mathsf{r}f(c_1,x,c_2)}{\mathsf{r}x}(a) \cdot b$
\item $\dfrac{\mathsf{r}x_j}{\mathsf{r}x_j}(a) \cdot b = b$ and $\dfrac{\mathsf{r}x_j}{\mathsf{r}x_i}(a) \cdot b = 0$ if $i \neq j$; 
\item $\dfrac{\mathsf{r}\left \langle f_1(c_1,x,c_2), \hdots, f_n(c_1,x,c_2) \right \rangle}{\mathsf{r}x}(a) \cdot (b_0, \hdots, b_n) = \sum \limits^n_{j=1} \dfrac{\mathsf{r}f_j(c_1,x,c_2)}{\mathsf{r}x}(a) \cdot b_j$
\item $\dfrac{\mathsf{r}g\left(c_1, f(c_1,x,c_2), c_2\right)}{\mathsf{r}x}(a) \cdot b = \dfrac{\mathsf{r}f(c_1,x,c_2)}{\mathsf{r}x}(a) \cdot \left( \dfrac{\mathsf{r}g(c_1,y,c_2)}{\mathsf{r}y}(f(c_1,a,c_2)) \cdot b \right)$
\item $\dfrac{\mathsf{r}\dfrac{\mathsf{r}\dfrac{\mathsf{r}f(c_1,x,c_2)}{\mathsf{r}x}(a) \cdot u}{\mathsf{r}u}(0) \cdot v}{\mathsf{r}v}(0) \cdot b = \dfrac{\mathsf{r}f(c_1,x,c_2)}{\mathsf{r}x}(a) \cdot b$
\item $\dfrac{\mathsf{r} \dfrac{\mathsf{r}\dfrac{\mathsf{r} \dfrac{\mathsf{r}f(c_1,x,c_2)}{\mathsf{r}x}(y) \cdot u}{\mathsf{r}u}(0) \cdot b}{\mathsf{r}y}(a) \cdot v}{\mathsf{r}v}(0) \cdot c = \dfrac{\mathsf{r} \dfrac{\mathsf{r}\dfrac{\mathsf{r} \dfrac{\mathsf{r}f(c_1,x,c_2)}{\mathsf{r}x}(y) \cdot u}{\mathsf{r}u}(0) \cdot c}{\mathsf{r}y}(a) \cdot v}{\mathsf{r}v}(0) \cdot b$
\end{enumerate}
\end{lemma}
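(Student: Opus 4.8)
The plan is to derive each of the seven context identities from the corresponding ambient axiom \textbf{[RD.1]}--\textbf{[RD.7]} in $\mathbb{X}$. Write $f \colon C_1 \times A \times C_2 \to B$ for the map being differentiated; by Definition \ref{def:partialreverse} the partial reverse derivative in the active variable is $\mathsf{R}_2[f] = \pi \circ \mathsf{R}[f]$, where $\pi \colon C_1 \times A \times C_2 \to A$ is the projection (after regrouping the product so that $A$ occupies a single slot, which is harmless since the regrouping isomorphism is built from $k$-linear projections and pairings). Two facts do all the work: (i) projection maps are $k$-linear, so post-composition by a projection commutes with sums, scalar multiples and zeroes; and (ii) ambient \textbf{[RD.3]}, which says that $\mathsf{R}$ of a projection $\pi_j$ is the ``partial injection'' placing $b$ in the $j$-th coordinate and $0$ in all others. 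Fact (ii) is precisely what makes the held-constant context variables $c_1, c_2$ contribute nothing to a partial derivative taken in $A$.

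With this in hand, \textbf{[RD.1]}, \textbf{[RD.2]} and \textbf{[RD.4]} are obtained by post-composing the ambient axiom of the same number with $\pi$ and applying fact (i). For \textbf{[RD.3]} in context, apply ambient \textbf{[RD.3]} to the coordinate projection $\pi_{A_j}$ of $\mathbb{X}$ and read off the $A_i$-component of the result, which is $b$ when $i = j$ and $0$ otherwise.

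The heart of the matter is \textbf{[RD.5]}. The composite in the relevant simple slice category is $h := g \circ \langle \pi_{C_1}, f, \pi_{C_2} \rangle$ for $f \colon C_1 \times A \times C_2 \to B$ and $g \colon C_1 \times B \times C_2 \to D$. Applying ambient \textbf{[RD.5]} rewrites $\mathsf{R}[h]$ using $\mathsf{R}[\langle \pi_{C_1}, f, \pi_{C_2}\rangle]$ and $\mathsf{R}[g]$; then ambient \textbf{[RD.4]} splits the former into a term built from $\mathsf{R}[\pi_{C_1}]$ and $\mathsf{R}[\pi_{C_2}]$ (fed the $C_1$- and $C_2$-outputs of $\mathsf{R}[g]$) plus a term built from $\mathsf{R}[f]$ (fed the $B$-output of $\mathsf{R}[g]$, which is exactly $\mathsf{R}_2[g]$). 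Post-composing with $\pi \colon C_1 \times A \times C_2 \to A$ and invoking fact (ii) annihilates the two projection terms, leaving precisely the claimed reverse chain rule in context.

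Lastly, \textbf{[RD.6]} and \textbf{[RD.7]} use the same machinery iterated. The only new ingredient is a bookkeeping lemma --- proved by one further round of ambient \textbf{[RD.5]} and \textbf{[RD.4]}, exactly as above --- that rewrites $\mathsf{R}\bigl[\pi \circ \mathsf{R}[f]\bigr]$ in terms of $\mathsf{R}[\mathsf{R}[f]]$ and reverse derivatives of projections; this lets the doubly- and triply-nested partial reverse derivatives on both sides be re-expressed through nested \emph{total} reverse derivatives. After that, the zero arguments occurring in the term-calculus expressions together with fact (ii) delete the extraneous terms, and what remains is exactly ambient \textbf{[RD.6]} and ambient \textbf{[RD.7]}. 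I expect the only genuine difficulty to be the bookkeeping in these last two cases --- tracking product reassociations and which coordinates receive a $0$; nothing conceptually beyond \textbf{[RD.1]}--\textbf{[RD.7]}, \textbf{[RD.3]} applied to projections, and \textbf{[RD.4]} applied to tuples is needed.
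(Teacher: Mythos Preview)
Your proposal is correct and follows essentially the same line as the paper. The paper's only organizational difference is that it first isolates your ``[RD.4]+[RD.3]+projection'' computation as a standalone identity
\[
\dfrac{\mathsf{r}\left\langle c_1, f(c_1,x,c_2), c_2 \right\rangle}{\mathsf{r}x}(a) \cdot (b_1,b_2,b_3) \;=\; \dfrac{\mathsf{r} f(c_1,x,c_2)}{\mathsf{r}x}(a) \cdot b_2,
\]
proves it once, and then cites it to dispatch \textbf{[RD.5]} (explicitly) and the remaining items (by the phrase ``shown in a similar fashion''). Your inline treatment of \textbf{[RD.5]} is exactly this computation unwound, and your ``bookkeeping lemma'' $\mathsf{R}[\pi\circ \mathsf{R}[f]] = \mathsf{R}[\mathsf{R}[f]](-,\iota(-))$ for \textbf{[RD.6]}--\textbf{[RD.7]} is the natural iteration the paper alludes to but does not spell out.
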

\begin{proof} To prove these, we first need to compute the following useful identity: 
\begin{gather*}
\dfrac{\mathsf{r}\left\langle c_1, f(c_1,x,c_2), c_2 \right \rangle}{\mathsf{r}x}(a) \cdot (b_1, b_2, b_3) ~~\substack{= \\ \text{Def.}}~~  \pi_2\left( \dfrac{\mathsf{r}\left\langle v, f(v,x,u), u \right \rangle}{\mathsf{r}(v,x,u)}(c_1,a,c_2) \cdot (b_1, b_2, b_3) \right) \\
    \substack{= \\ \text{\textbf{[R.4]}}}~~  \pi_2\left(\dfrac{\mathsf{r}v}{\mathsf{r}(v,x,u)}(c_1,a,c_2) \cdot b_1 +  \dfrac{\mathsf{r} f(v,x,u)}{\mathsf{r}(v,x,u)}(c_1,a,c_2) \cdot b_2 + \dfrac{\mathsf{r}u}{\mathsf{r}(v,x,u)}(c_1,a,c_2) \cdot b_3 \right)  \\ 
  \substack{= \\ \text{\textbf{[R.3]} + (\ref{Ri-tuple})}}~~ \pi_2\left( (b_1,0,0) + \left ( \dfrac{\mathsf{r} f(v,a,c_2)}{\mathsf{r}v}(c_1) \cdot b_2, \dfrac{\mathsf{r} f(c_1,x,c_2)}{\mathsf{r}x}(a) \cdot b_2, \dfrac{\mathsf{r} f(c_1,a,u)}{\mathsf{r}u}(c_2) \cdot b_2  \right ) + (0,0,b_3) \right) \\
\substack{= \\ \text{$\pi_j$ is $k$-lin.}}~~  \pi_2(b_1,0,0) + \pi_2\left(  \dfrac{\mathsf{r} f(v,a,c_2)}{\mathsf{r}v}(c_1) \cdot b_2, \dfrac{\mathsf{r} f(c_1,x,c_2)}{\mathsf{r}x}(a) \cdot b_2, \dfrac{\mathsf{r} f(c_1,a,u)}{\mathsf{r}u}(c_2) \cdot b_2  \right ) + \pi_2(0,0,b_3) \\ 
 = 0 + \dfrac{\mathsf{r} f(c_1,x,c_2)}{\mathsf{r}x}(a) \cdot b_2 + 0 = \dfrac{\mathsf{r} f(c_1,x,c_2)}{\mathsf{r}x}(a) \cdot b_2 
\end{gather*} 
So we have that: 
\begin{align}\label{useful1}
    \dfrac{\mathsf{r}\left\langle c_1, f(c_1,x,c_2), c_2 \right \rangle}{\mathsf{r}x}(a) \cdot (b_1, b_2, b_3) = \dfrac{\mathsf{r} f(c_1,x,c_2)}{\mathsf{r}x}(a) \cdot b_2
\end{align}
Using this and the corresponding axiom \textbf{[RD.\#]}, we can prove the corresponding partial reverse version. Let's prove the reverse chain rule in context: 
\begin{gather*}
  \dfrac{\mathsf{r}g\left(c_1, f(c_1,x,c_2), c_2\right)}{\mathsf{r}x}(a) \cdot b ~~\substack{= \\ \text{Def.}}~~ \pi_2\left( \dfrac{\mathsf{r}g\left(v, f(v,x,u), u\right)}{\mathsf{r}(v,x,u)}(c_1,a,c_2) \cdot b  \right) \\
     \substack{= \\ \text{\textbf{[R.5]}}}~~  \pi_2\left( \dfrac{\mathsf{r}\left\langle v, f(v,x,u), u \right \rangle}{\mathsf{r}(v,x,u)}(c_1,a,c_2) \cdot \left( \dfrac{\mathsf{r}g(w,y,z)}{\mathsf{r}(w,y,z)}(c_1, f(c_1,a,c_2), c_2) \cdot b \right) \right) \\ 
\substack{= \\ \text{Def.}}~~ \dfrac{\mathsf{r}\left\langle c_1, f(c_1,x,c_2), c_2 \right \rangle}{\mathsf{r}x}(a) \cdot \left( \dfrac{\mathsf{r}g(w,y,z)}{\mathsf{r}(w,y,z)}(c_1, f(c_1,a,c_2), c_2) \cdot b \right) \\ 
 \substack{= \\ \text{(\ref{Ri-tuple})}}~~ \text{\footnotesize $\dfrac{\mathsf{r}\left\langle c_1, f(c_1,x,c_2), c_2 \right \rangle}{\mathsf{r}x}(a) \cdot \left( \dfrac{\mathsf{r}g\left(w, f(c_1,a,c_2), c_2 \right)}{\mathsf{r}w}(c_1) \cdot b, \dfrac{\mathsf{r}g(c_1,y,c_2)}{\mathsf{r}y}(f(c_1,a,c_2)) \cdot b,  \dfrac{\mathsf{r}g(c_1,f(c_1,a,c_2),z)}{\mathsf{r}z}(c_2) \cdot b \right)$ } \\ 
  \substack{= \\ \text{(\ref{useful1})}}~~  \dfrac{\mathsf{r} f(c_1,x,c_2)}{\mathsf{r}x}(a) \cdot \dfrac{\mathsf{r}g(c_1,y,c_2)}{\mathsf{r}y}(f(c_1,a,c_2)) \cdot b
\end{gather*} 
The other identities are shown in a similar fashion. 
\end{proof}

Then, it follows from the above lemma that the simple slice categories of a Cartesian reverse differential category are again Cartesian reverse differential categories. This is the reverse differentiation analogue of \cite[Cor 4.5.2]{blute2009cartesian}. 

\begin{corollary} Let $\mathbb{X}$ be a Cartesian $k$-reverse differential category. Then for each object $C \in \mathbb{X}$, the simple slice category $\mathbb{X}[C]$ is a Cartesian $k$-reverse differential category whose reverse differential combinator $\mathsf{R}^C$ sends a map $f: C \times A \to B$ to the map $\mathsf{R}^C[f]: C \times A \times B \to A$ defined as the following composite: 
\begin{align}
    \xymatrixcolsep{5pc}\xymatrix{   \mathsf{R}^C[f]: C \times A \times B \ar[r]^-{\mathsf{R}[f]}  &  C \times A \ar[r]^-{\pi_2} & A } 
\end{align} 
\end{corollary}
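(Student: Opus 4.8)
The plan is to verify the four pieces of data that make $\mathbb{X}[C]$ a Cartesian $k$-reverse differential category, exploiting the fact that almost all of the work has already been done in Lemma~\ref{lem:R17context}. First I would recall that the paper already asserts (in the paragraph preceding the corollary) that if $\mathbb{X}$ is a Cartesian left $k$-linear category then so is $\mathbb{X}[C]$, with the same $k$-linear structure on hom-sets and with composition $g\circ\langle\pi_1,f\rangle$ and identity $\pi_2$; so the underlying categorical structure is in place and I only need to supply the reverse differential combinator and check its seven axioms. I would then observe that for $f\colon C\times A\to B$, i.e.\ a map $f\in\mathbb{X}[C](A,B)$, the prescribed $\mathsf{R}^C[f]=\pi_2\circ\mathsf{R}[f]\colon C\times A\times B\to A$ is exactly the partial reverse derivative $\mathsf{R}_2[f]$ of $f$ viewed as a two-variable map on $C\times A$, and hence in the term calculus $\mathsf{R}^C[f]$ is precisely $\frac{\mathsf{r}f(c,x)}{\mathsf{r}x}(a)\cdot b$ with $c$ playing the role of the context; note also that $\mathsf{R}^C$ has the right type, sending $\mathbb{X}[C](A,B)=\mathbb{X}(C\times A,B)$ to $\mathbb{X}[C](A\times B,A)=\mathbb{X}(C\times A\times B,A)$.

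With that identification made, the verification of \textbf{[RD.1]}--\textbf{[RD.7]} for $\mathsf{R}^C$ in $\mathbb{X}[C]$ reduces, essentially line by line, to the corresponding numbered equations already established in Lemma~\ref{lem:R17context}, which were stated exactly in the ``reverse differentiation in context'' form (with a trivial or empty second context component $c_2$, so that $f(c_1,x,c_2)$ becomes $f(c,x)$). The only points requiring a small amount of care are translating the simple-slice composition and product structure into term-calculus statements: for \textbf{[RD.5]} one must check that the simple-slice composite of $g\colon C\times B\to D$ and $f\colon C\times A\to B$, namely $g\circ\langle\pi_1,f\rangle$, is what the term $g(c,f(c,x))$ denotes, so that the context-form chain rule of Lemma~\ref{lem:R17context}(5) is literally the statement of \textbf{[RD.5]} in $\mathbb{X}[C]$; and for \textbf{[RD.3]} one must note that the identity of $\mathbb{X}[C]$ is $\pi_2$ and that projections in $\mathbb{X}[C]$ are the evident projections of $\mathbb{X}$ postcomposed appropriately, so that Lemma~\ref{lem:R17context}(3) gives exactly the identities/projections clause. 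Axioms \textbf{[RD.1]}, \textbf{[RD.2]}, \textbf{[RD.4]}, \textbf{[RD.6]}, \textbf{[RD.7]} then match the corresponding items of Lemma~\ref{lem:R17context} directly, since the $k$-linear structure of $\mathbb{X}[C]$ is inherited from $\mathbb{X}$.

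I expect the main (and really the only) obstacle to be bookkeeping rather than mathematics: one has to be scrupulous about the fact that objects and the context object $C$ are unchanged, that a map ``out of $A$ in context $C$'' really is a map out of $C\times A$, and that the combinator $\mathsf{R}^C$ is \emph{natural in the context} in the weak sense needed, i.e.\ it does not differentiate in the $C$ slot — this is precisely what picking out $\pi_2$ (rather than the full $\mathsf{R}[f]$) accomplishes, and it is exactly what makes equation~\eqref{useful1} and the partial-derivative computations in the proof of Lemma~\ref{lem:R17context} go through. Since the proof is a direct unwinding of definitions combined with citing Lemma~\ref{lem:R17context}, I would present it compactly, stating the identification $\mathsf{R}^C[f]=\mathsf{R}_2[f]$ and then remarking that \textbf{[RD.1]}--\textbf{[RD.7]} for $\mathbb{X}[C]$ are respectively the seven equations of Lemma~\ref{lem:R17context}, with the composition and identity of $\mathbb{X}[C]$ substituted in for \textbf{[RD.3]} and \textbf{[RD.5]}; this mirrors how \cite[Cor 4.5.2]{blute2009cartesian} is deduced from the forward analogue of Lemma~\ref{lem:R17context}.
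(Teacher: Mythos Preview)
Your proposal is correct and follows essentially the same approach as the paper: the paper's proof is a single sentence noting that the reverse differential combinator axioms for $\mathsf{R}^C$ in $\mathbb{X}[C]$ are precisely the identities of Lemma~\ref{lem:R17context} with context only on the left, which is exactly the reduction you describe. Your write-up is more explicit about the bookkeeping (identifying $\mathsf{R}^C[f]=\mathsf{R}_2[f]$, handling \textbf{[RD.3]} and \textbf{[RD.5]} under the simple-slice identity and composition), but the argument is the same.
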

\begin{proof} The necessary reverse differential combinator axioms with respect to composition in the simple slice category are precisely the identities from Lemma \ref{lem:R17context} with context only on the left. 
\end{proof}

\section{Forward Differentiation and Linear Transpose}\label{sec:forward}

The fundamental theorem for Cartesian reverse differential categories is that they are precisely the same as Cartesian \emph{forward} differential categories equipped with a suitable notion of transpose operation, called a \emph{contextual linear dagger}. In particular, this says we obtain forward differentiation and a transpose operation from reverse differentiation, and vice-versa. 

\begin{theorem} \cite[Thm 42]{cockett2019reverse} A Cartesian $k$-reverse differential category is precisely a Cartesian $k$-differ-ential category with a contextual $k$-linear dagger.  
\end{theorem}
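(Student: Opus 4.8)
\proof
The plan is to prove, more precisely, that on any fixed Cartesian $k$-linear category $\mathbb{X}$ the reverse differential combinators on $\mathbb{X}$ are in bijective correspondence with the pairs $(\mathsf{D}, \dagger)$ where $\mathsf{D}$ is a forward differential combinator making $\mathbb{X}$ a Cartesian $k$-differential category and $\dagger$ is a contextual $k$-linear dagger compatible with $\mathsf{D}$. The argument splits into three parts: (i) from a reverse differential combinator $\mathsf{R}$ produce a pair $(\mathsf{D}, \dagger)$; (ii) from a pair $(\mathsf{D},\dagger)$ produce a reverse differential combinator $\mathsf{R}$; (iii) check that the two assignments are mutually inverse. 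Throughout, one would work in the simple slice categories, using the partial reverse derivatives of Definition~\ref{def:partialreverse} and the contextual axioms of Lemma~\ref{lem:R17context}, so that ``$k$-linear in a context $C$'' can be handled uniformly. The case $k = \mathbb{N}$ is \cite[Thm 42]{cockett2019reverse}, and the only additional work is to observe that every step below upgrades verbatim from additivity to $k$-linearity, scalar multiplication being preserved by pre-composition, by projections, and by the combinators via \textbf{[RD.1]}, \textbf{[RD.2]}.

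For part (i), given $\mathsf{R}$, I would define the forward derivative of $f : A \to B$ by reverse-differentiating $\mathsf{R}[f]$ a second time in its linear slot and evaluating at the basepoint $0$; in the term calculus,
\[ \mathsf{D}[f](a,v) := \dfrac{\mathsf{r}\,\dfrac{\mathsf{r}f(x)}{\mathsf{r}x}(a) \cdot u}{\mathsf{r}u}(0) \cdot v, \]
which is precisely the inner nested expression appearing in \textbf{[RD.6]}. For a map $g$ that is $k$-linear in its last argument in a context $C$, I would define its dagger by the reverse derivative in that argument at the basepoint, $g^\dagger(c,b) := \frac{\mathsf{r}g(c,x)}{\mathsf{r}x}(0) \cdot b$. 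Then one must verify: that $\mathsf{D}$ satisfies the seven forward differential combinator axioms \textbf{[CD.1]}--\textbf{[CD.7]}, with \textbf{[CD.7]} extracted from \textbf{[RD.7]}; that $\dagger$ is an involution on linear-in-context maps, which is exactly \textbf{[RD.6]}, is functorial on them and compatible with the product structure; and that $\mathsf{R}$ is recovered as $\mathsf{R}[f] = \mathsf{D}[f]^\dagger$, which pins down the compatibility axiom linking $\mathsf{D}$ and $\dagger$.

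For part (ii), given $(\mathsf{D},\dagger)$, note that $\mathsf{D}[f] : A \times A \to B$ is $k$-linear in its second argument by \textbf{[CD.2]}, hence linear in the context $A$, so $\mathsf{D}[f]^\dagger : A \times B \to A$ is defined; set $\mathsf{R}[f] := \mathsf{D}[f]^\dagger$. One then checks \textbf{[RD.1]}--\textbf{[RD.7]}: \textbf{[RD.1]}, \textbf{[RD.2]} from \textbf{[CD.1]}, \textbf{[CD.2]} and the $k$-linearity of $\dagger$; \textbf{[RD.3]}--\textbf{[RD.5]} from \textbf{[CD.3]}--\textbf{[CD.5]} and the compatibility of $\dagger$ with identities, projections, tuples and composition of linear maps; \textbf{[RD.6]} from the involution of $\dagger$; and \textbf{[RD.7]} from \textbf{[CD.7]} together with the $\mathsf{D}$--$\dagger$ compatibility. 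For part (iii), running the pair $\mathsf{R} = \mathsf{D}[-]^\dagger$ back through the construction of part (i) and simplifying with the dagger involution returns $(\mathsf{D},\dagger)$, while running the pair coming from $\mathsf{R}$ back through part (ii) returns $\mathsf{R}$ by the recovery identity $\mathsf{R}[f] = \mathsf{D}[f]^\dagger$ already established in (i); both reductions are then formal.

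The main obstacle is transferring the two ``hard'' axioms across the correspondence: deriving \textbf{[CD.7]} (symmetry of the second forward derivative) from \textbf{[RD.7]}, and conversely deriving \textbf{[RD.7]} from \textbf{[CD.7]} plus the dagger compatibility. Here one must track basepoints and which argument the dagger acts in very carefully, whereas \textbf{[RD.1]}--\textbf{[RD.6]} and \textbf{[CD.1]}--\textbf{[CD.6]} are comparatively mechanical. A secondary subtlety is stating the axioms of a contextual $k$-linear dagger tightly enough that the correspondence is genuinely bijective rather than a mere retraction, i.e.\ so that no dagger data on linear-in-context maps is left unconstrained by $\mathsf{R}$.
\endproof
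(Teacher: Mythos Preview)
Your proposal is correct and takes essentially the same approach as the paper: the paper's own proof is the single sentence ``The proof in the relative to $k$ setting is essentially the same as was done in \cite{cockett2019reverse},'' and you say exactly this, while additionally sketching the content of that cited argument (the two constructions $\mathsf{R} \mapsto (\mathsf{D},\dagger)$ and $(\mathsf{D},\dagger) \mapsto \mathsf{R}$ and their mutual inversion).

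One small slip worth correcting: the contextual linear dagger is defined on maps that are \emph{$\mathsf{D}$-linear} in the distinguished argument, not merely $k$-linear (the paper is explicit that $\mathsf{D}$-linear implies $k$-linear but not conversely). So in part (i) your $g$ should be assumed $\mathsf{D}$-linear in its last argument, and in part (ii) the fact that $\mathsf{D}[f]$ lies in the domain of $\dagger$ comes from \textbf{[CD.6]} (which says $\mathsf{D}[f]$ is $\mathsf{D}$-linear in its second argument), not \textbf{[CD.2]} (which only gives $k$-linearity). This does not affect the structure of your argument, only the labelling.
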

\begin{proof} The proof in the relative to $k$ setting is essentially the same as was done in \cite{cockett2019reverse}. 
\end{proof}

We will not thoroughly review Cartesian differential categories and contextual linear daggers here. For a more in-depth introduction to Cartesian forward differential categories, we invite the reader to see \cite{blute2009cartesian,garner2020cartesian,Lemaygradient}, and for an in-depth introduction to contextual linear daggers, we invite the reader to see \cite{cockett2019reverse,cruttwell2022monoidal}. Instead, we will focus on revisiting how to build a forward differential combinator and contextual linear dagger from a reverse differential combinator. We will also see how these constructions can be nicely expressed using partial reverse derivatives. 

Briefly, a \textbf{Cartesian $k$-differential category} \cite[Sec 2.2]{garner2020cartesian} is a Cartesian $k$-linear category which comes equipped with a \textbf{forward differential combinator}, which is an operator which sends maps ${f: A \to B}$ to their \textbf{forward derivative} $\mathsf{D}[f]: A \times A \to B$. The differential combinator axioms are analogues of the well-known identities of the total derivative from differential calculus, such as the chain rule. As previously mentioned, Cartesian differential categories have a very practical term calculus \cite[Sec 4]{blute2009cartesian}. So, we write the forward derivative as:
\[\mathsf{D}[f](a,b) \colon = \dfrac{\mathsf{d}f(x)}{\mathsf{d}x}(a) \cdot b\] 
In particular, one the axioms for the forward differential combinator is \textbf{[CD.5]} the chain rule for differentiating a composition of maps, which in the term calculus is expressed as: 
\begin{align}
    \dfrac{\mathsf{d}g\left(f(x) \right)}{\mathsf{d}x}(a) \cdot b = \dfrac{\mathsf{d}g(y)}{\mathsf{d}y}(f(a)) \cdot \left( \dfrac{\mathsf{d}f(x)}{\mathsf{d}x}(a) \cdot b \right)
\end{align}

Let's now revisit how to go from reverse differentiation to forward differentiation by rewriting the construction of \cite[Thm 16]{cockett2019reverse} using partial reverse derivatives. Consider a map $f: A \to B$. We need to build a map of type $A \times A \to B$. We can first take its reverse derivative to get $\mathsf{R}[f]: A \times B \to A$, and then by taking the partial reverse derivative with respect to $B$, we get a map of type $\mathsf{R}_2\left[\mathsf{R}[f] \right]: A \times B \times A \to B$. Then inserting zero into the $B$ argument of the domain gives us a map of type $A \times A \to B$ as desired. Explicitly, for a map $f: A \to B$, its forward derivative is defined as follows: 
\begin{align}\label{RD}
     \dfrac{\mathsf{d}f(x)}{\mathsf{d}x}(a) \cdot b \colon =   \dfrac{\mathsf{r} \dfrac{\mathsf{r}f(x)}{\mathsf{r}x}(a) \cdot y}{\mathsf{r}y}(0) \cdot b 
\end{align}
Then \textbf{[RD.7]} is precisely expressing the axiom \textbf{[CD.7]} of the forward differential combinator (which amounts to the symmetry of the mixed partial forward derivatives) using only reverse differentiation. 

\begin{example}\label{ex:smoothCDC} In $\mathsf{SMOOTH}$, the forward differential combinator is the usual total derivative from calculus, which is instead defined using the Jacobian \cite[Ex 5.3]{cockett2019reverse}. Explicitly, for a smooth function $F: \mathbb{R}^n \to \mathbb{R}^m$ (where recall that $F$ is in fact a tuple $F = \langle f_1, \hdots, f_m \rangle$ of smooth functions $f_j:  \mathbb{R}^n \to \mathbb{R}$), its forward derivative $\mathsf{D}[F]: \mathbb{R}^n \times \mathbb{R}^n \to \mathbb{R}^m$ is defined as follows: 
\begin{align*}
     \mathsf{D}[F](\vec x, \vec y) := \left(\sum \limits^n_{i=1} \frac{\partial f_1}{\partial x_i}(\vec x) y_i, \hdots, \sum \limits^n_{i=1} \frac{\partial f_m}{\partial x_i}(\vec x) y_i \right) 
\end{align*}
\end{example}

Before defining the transpose operation, we first need to quickly discuss \emph{partial} forward derivatives and \emph{differential linearity}. Starting with partial forward derivatives: given a map $f: A_0 \times \hdots \times A_n \to B$, we'd like to take the forward derivative of $f$ with respect to the component $A_j$ while keeping the rest constant. In differential calculus, partial derivatives are obtained by inserting zeroes in the appropriate vector argument of the total derivative. The same idea holds in a Cartesian differential category. Then the \textbf{$j$-th partial forward derivative} \cite[Def 2.7]{garner2020cartesian} of $f$ is the map $\mathsf{D}_j[f]:  A_1 \times \hdots \times A_n \times A_j \to B$ written in term calculus on the left below and is defined on the right as follows: 
\begin{align}\dfrac{\mathsf{d}f(a_1, \hdots, a_{j-1}, x_j, a_{j+1}, \hdots, a_n)}{\mathsf{d}x_j}(a_j) \cdot b ~~\colon \!\!\!\!= \dfrac{\mathsf{d}f(x_1, \hdots, x_n)}{\mathsf{d}(x_1, \hdots, x_n)}(a_1, \hdots, a_n) \cdot (0, \hdots, 0, b, 0, \hdots, 0) 
\end{align}
A map $f: A_1 \times \hdots \times A_n \to B$ is said to be \textbf{differential linear ($\mathsf{D}$-linear)} \cite[Def 2.6]{garner2020cartesian} in $A_j$ if when taking its $j$-th partial derivative, the following equality holds: 
\[ \dfrac{\mathsf{d}f(a_1, \hdots, a_{j-1}, x_j, a_{j+1}, \hdots, a_n)}{\mathsf{d}x_j}(a_j) \cdot b = f(a_1, \hdots, a_{j-1}, b, a_{j+1}, \dots, a_n) \]
Properties of differential linear maps can be found in \cite[Lemma 2.6]{cockett2020linearizing}, such as the fact that they are closed under composition, $k$-linear structure, etc. In particular, if $f$ is $\mathsf{D}$-linear in its $j$-th variable, then it is also $k$-linear in its $j$-th variable \cite[Lemma 2.6.(i)]{cockett2020linearizing} (though the converse is not necessarily true in an arbitrary Cartesian differential category). Moreover, the axiom \textbf{[CD.6]} of the forward differential combinator is precisely the statement that forward derivatives are $\mathsf{D}$-linear in their second argument.

A \textbf{contextual linear dagger} \cite[Def 39]{cockett2019reverse} is an involutive and contravariant operation on maps with a $\mathsf{D}$-linear argument, which swaps the codomain with said $\mathsf{D}$-linear argument. Here we will write down the contextual linear dagger with context both on the left and right, which is equivalent to simply having context on the left as in \cite{cockett2019reverse}. So for a map $f: C_1 \times A \times C_2 \to B$ which is $\mathsf{D}$-linear in $A$, the \textbf{$\mathsf{D}$-linear transpose in $A$} of $f$ is the map the map $f^{\dagger[C_1 \times \_ \times C_2]}: C_1 \times B \times C_2 \to A$ defined as the partial reverse derivative in $A$ of $f$ evaluated at $0$:  
\begin{align}
     f^{\dagger[C_1 \times \_ \times C_2]}(c_1,b,c_2) \colon \!\!\!=   \dfrac{\mathsf{r} f(c_1, x, c_2)}{\mathsf{r}x}(0) \cdot b 
\end{align}
Now $f^{\dagger[C_1 \times \_ \times C_2]}: C_1 \times B \times C_2 \to A$ is $\mathsf{D}$-linear in $B$ \cite[Cor 36]{cockett2019reverse} and this operation is involutive \cite[Lemma 35]{cockett2019reverse}, that is, the $\mathsf{D}$-linear transpose of $f^{\dagger[C_1 \times \_ \times C_2]}$ is $f$. Other properties of the contextual linear dagger can be found in \cite{cockett2019reverse,cruttwell2022monoidal}. In particular, it will be useful for the reverse Faà di Bruno's Formula to recall the fact that the contextual linear dagger is contravariant with respect to composition in context. 

\begin{lemma} In a Cartesian $k$-reverse differential category, if $f: C_1 \times A \times C_2 \to B$ is $\mathsf{D}$-linear in $A$ and $g: C_1 \times B \times C_2 \to E$ is $\mathsf{D}$-linear in $B$, then $(g \circ \langle \pi_1, f, \pi_3 \rangle)^{\dagger[C_1 \times \_ \times C_2]} = f^{\dagger[C_1 \times \_ \times C_2]} \circ \langle \pi_1, g^{\dagger[C_1 \times \_ \times C_2]}, \pi_3 \rangle$, which in the term logic is explicitly expressed as: 
\begin{align}\label{eq:contra}
\dfrac{\mathsf{r} g\left(c_1, f(c_1, x, c_2), c_2\right)}{\mathsf{r}x} (0) \cdot b = \dfrac{\mathsf{r} f(c_1, x, c_2)}{\mathsf{r}x} (0) \cdot \left( \dfrac{\mathsf{r} g\left(c_1, y, c_2\right)}{\mathsf{r}y} (0) \cdot b \right) 
\end{align}
\end{lemma}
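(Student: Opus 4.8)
The plan is to specialise the reverse chain rule in context to the basepoint $a = 0$ and then use the $\mathsf{D}$-linearity hypothesis on $f$ to clean up the result. Concretely, by \textbf{[RD.5]} of Lemma~\ref{lem:R17context}, for every $a$ we have
\begin{align*}
\dfrac{\mathsf{r}g\left(c_1, f(c_1,x,c_2), c_2\right)}{\mathsf{r}x}(a) \cdot b = \dfrac{\mathsf{r}f(c_1,x,c_2)}{\mathsf{r}x}(a) \cdot \left( \dfrac{\mathsf{r}g(c_1,y,c_2)}{\mathsf{r}y}(f(c_1,a,c_2)) \cdot b \right).
\end{align*}
Setting $a = 0$ turns the left-hand side and the outer reverse derivative of $f$ on the right-hand side into exactly the two expressions appearing in (\ref{eq:contra}), except that the inner reverse derivative of $g$ is evaluated at $f(c_1,0,c_2)$ rather than at $0$.

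The only remaining step is therefore to argue that $f(c_1,0,c_2) = 0$. Since $f$ is $\mathsf{D}$-linear in $A$, it is in particular $k$-linear in $A$ by \cite[Lemma 2.6.(i)]{cockett2020linearizing}, so $f(c_1, s\cdot u + t\cdot v, c_2) = s\cdot f(c_1,u,c_2) + t\cdot f(c_1,v,c_2)$; taking $s = t = 0$ gives $f(c_1,0,c_2) = 0$. Substituting this back into the displayed equation yields (\ref{eq:contra}), and the statement about the contextual linear daggers follows by unwinding the definition of $(-)^{\dagger[C_1 \times \_ \times C_2]}$.

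I do not expect a genuine obstacle here: the identity is essentially the reverse chain rule in context read at the basepoint, and the $\mathsf{D}$-linearity of $f$ is precisely what is needed to collapse the evaluation point $f(c_1,0,c_2)$ to $0$. Note that the $\mathsf{D}$-linearity of $g$ plays no role in the computation itself; it is only needed so that $g^{\dagger[C_1 \times \_ \times C_2]}$, $f^{\dagger[C_1 \times \_ \times C_2]}$, and the $\mathsf{D}$-linear transpose of the composite $g \circ \langle \pi_1, f, \pi_3 \rangle$ are all well-defined.
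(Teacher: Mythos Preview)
Your proposal is correct and follows essentially the same route as the paper's own proof: apply the reverse chain rule in context from Lemma~\ref{lem:R17context} at $a=0$, then use that $\mathsf{D}$-linearity of $f$ in $A$ implies $k$-linearity so that $f(c_1,0,c_2)=0$. Your additional remark that the $\mathsf{D}$-linearity of $g$ is only needed for the daggers to be well-defined, not for the computation itself, is accurate and worth keeping.
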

\begin{proof} The proof is essentially the same as the equational calculations done in the proof of \cite[Thm 37]{cockett2019reverse}. However, since it is crucial to proving the reverse Faà di Bruno's Formula, revisiting the proof using the term calculus may be useful. So first recall that since $f$ is $\mathsf{D}$-linear in $A$, it is also $k$-linear in $A$. So in particular $f(c_1, 0, c_2) =0$. So then we compute: 
\begin{gather*}
    \dfrac{\mathsf{r} g\left(c_1, f(c_1, x, c_2), c_2\right)}{\mathsf{r}x} (0) \cdot b  ~~\substack{= \\ \text{\textbf{[R.5]}}}~~ \dfrac{\mathsf{r} f(c_1, x, c_2)}{\mathsf{r}x} (0) \cdot \left( \dfrac{\mathsf{r} g\left(c_1, y, c_2\right)}{\mathsf{r}y} \left( f(c_1, 0, c_2)  \right) \cdot b \right) \\
    \substack{= \\ \text{$\mathsf{D}$-lin. $\Rightarrow$ $k$-lin.}} \dfrac{\mathsf{r} f(c_1, x, c_2)}{\mathsf{r}x} (0) \cdot \left( \dfrac{\mathsf{r} g\left(c_1, y, c_2\right)}{\mathsf{r}y} (0) \cdot b \right) 
\end{gather*}
So the desired equality holds. \end{proof}

Now \cite[Thm 42]{cockett2019reverse} tells us that the $\mathsf{D}$-linear transpose of the forward derivative is the reverse derivative, that is, $\mathsf{D}[f]^{\dagger[A \times \_]} = \mathsf{R}[f]$, which in term calculus is written as: 
\begin{align}\label{eq:11}
    \dfrac{\mathsf{r}\dfrac{\mathsf{d}f(x)}{\mathsf{d}x}(a) \cdot u}{\mathsf{r}u}(0) \cdot b = \dfrac{\mathsf{r}f(x)}{\mathsf{r}x}(a) \cdot b 
 \end{align}
which we note is precisely \textbf{[RD.6]}. Moreover, this also implies that $\mathsf{R}[f]$ is $\mathsf{D}$-linear in its second argument and that the $\mathsf{D}$-linear transpose of the reverse derivative is the forward derivative, that is, $\mathsf{R}[f]^{\dagger[A \times \_]} = \mathsf{D}[f]$. Furthermore, we can also show that the partial reverse derivative in the second argument of the reverse derivative is precisely the forward derivative: 

\begin{lemma}\label{lemma:profR}In a Cartesian $k$-reverse differential category, the following equalities hold: 
\begin{align} 
    \dfrac{\mathsf{r}\dfrac{\mathsf{r}f(x)}{\mathsf{r}x}(a_1) \cdot u}{\mathsf{r}u}(b) \cdot a_2 = \dfrac{\mathsf{d}f(x)}{\mathsf{d}x}(a_1) \cdot a_2 
\end{align}
\end{lemma}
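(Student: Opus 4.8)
The plan is to recognise that the left-hand side of the asserted equality is exactly the partial reverse derivative of $\mathsf{R}[f]\colon A\times B\to A$ in its \emph{second} variable, taken at base point $b$, whereas by Definition~(\ref{RD}) the right-hand side is that \emph{same} partial reverse derivative taken at base point $0$. Thus the entire content of the lemma is that this partial reverse derivative does not depend on its base point. The tool for that is the fact, recorded in the paragraph just before the lemma, that $\mathsf{R}[f]$ is $\mathsf{D}$-linear in its second argument (equivalently that $\mathsf{R}[f]^{\dagger[A\times\_]}=\mathsf{D}[f]$) -- not merely $k$-linear in that argument.

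Concretely, I would argue as follows. Since the simple slice category $\mathbb{X}[A]$ is again a Cartesian $k$-reverse differential category (by the Corollary above), the identity~(\ref{eq:11}) holds in $\mathbb{X}[A]$. Applied to $\mathsf{R}[f]$ viewed as a morphism $B\to A$ of $\mathbb{X}[A]$, it says that the left-hand side $\frac{\mathsf{r}\frac{\mathsf{r}f(x)}{\mathsf{r}x}(a_1)\cdot u}{\mathsf{r}u}(b)\cdot a_2$ equals $\frac{\mathsf{r}\,\mathsf{D}_2[\mathsf{R}[f]](a_1,b,u)}{\mathsf{r}u}(0)\cdot a_2$, where $\mathsf{D}_2[\mathsf{R}[f]]$ is the partial forward derivative of $\mathsf{R}[f]$ in its second variable. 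Now $\mathsf{D}$-linearity of $\mathsf{R}[f]$ in its second argument says precisely that $\mathsf{D}_2[\mathsf{R}[f]](a_1,b,u)=\mathsf{R}[f](a_1,u)=\frac{\mathsf{r}f(x)}{\mathsf{r}x}(a_1)\cdot u$, so the expression becomes $\frac{\mathsf{r}\frac{\mathsf{r}f(x)}{\mathsf{r}x}(a_1)\cdot u}{\mathsf{r}u}(0)\cdot a_2$, which is $\frac{\mathsf{d}f(x)}{\mathsf{d}x}(a_1)\cdot a_2$ by Definition~(\ref{RD}). Chaining these equalities gives the result. (One could equally well avoid mentioning slice categories and simply invoke the contextual forms of the axioms, as in Lemma~\ref{lem:R17context}, to get the contextual version of~(\ref{eq:11}) used above.)

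The main obstacle is recognising that $k$-linearity of $\mathsf{R}[f]$ in its second argument -- which is immediate from \textbf{[RD.2]} -- is \emph{not} sufficient: one genuinely needs the sharper fact that $\mathsf{R}[f]$ is $\mathsf{D}$-linear in its second argument, which is what forces the partial reverse derivative to be independent of its base point. Once that ingredient is in place the proof is a short bookkeeping computation in the term calculus, the only mild care being to apply the derived identity~(\ref{eq:11}) in context (equivalently, inside the appropriate simple slice category), for which the Corollary above is exactly what is needed.
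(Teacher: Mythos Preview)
Your proof is correct and is essentially the paper's own argument: apply the contextual form of~(\ref{eq:11}) to the map $\mathsf{R}[f]$ (with context $a_1$), then use $\mathsf{D}$-linearity of $\mathsf{R}[f]$ in its second argument to drop the base point $b$, and finally invoke Definition~(\ref{RD}). The only cosmetic difference is that the paper writes out $\mathsf{D}_2[\mathsf{R}[f]]$ explicitly in the term calculus rather than naming it, and justifies the contextual use of~(\ref{eq:11}) tacitly rather than by appeal to the simple-slice Corollary.
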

\begin{proof} We compute: 
    \begin{gather*}
        \dfrac{\mathsf{r}\dfrac{\mathsf{r}f(x)}{\mathsf{r}x}(a_1) \cdot u}{\mathsf{r}u}(b) \cdot a_2  ~~\substack{= \\ \text{(\ref{eq:11})}}~~ \dfrac{\mathsf{r}  \dfrac{\mathsf{d}\dfrac{\mathsf{r}f(x)}{\mathsf{r}x}(a_1) \cdot u}{\mathsf{d}u}(b) \cdot v}{\mathsf{r}v}(0) \cdot a_2 ~~\substack{= \\ \text{$\mathsf{D}$-lin.}}~~  \dfrac{\mathsf{r} \dfrac{\mathsf{r}f(x)}{\mathsf{r}x}(a_1) \cdot v}{\mathsf{r}v}(0) \cdot a_2 ~~\substack{= \\ \text{(\ref{RD})}}~~   \dfrac{\mathsf{d}f(x)}{\mathsf{d}x}(a_1) \cdot a_2
    \end{gather*}
So the desired equality holds. \end{proof}

We conclude this section by showing that the partial reverse derivative is the $\mathsf{D}$-linear transpose of the partial forward derivative. 

\begin{lemma}\label{lemma:partialtranspose} In a Cartesian $k$-reverse differential category, for every map $f: A_1 \times \hdots \times A_n \to B$, we have that $\mathsf{D}_j[f]^{\dagger[A_1 \times \hdots \times A_n \times \_]} = \mathsf{R}_j[f]$, which in the term calculus is expressed as follows: 
\begin{align}
    \dfrac{\mathsf{r}\dfrac{\mathsf{d}f(a_1, \hdots, a_{j-1}, x_j, a_{j+1}, \hdots, a_n)}{\mathsf{d}x_j}(a_j) \cdot u}{\mathsf{r}u}(0) \cdot b = \dfrac{\mathsf{r}f(a_1, \hdots, a_{j-1}, x_j, a_{j+1}, \hdots, a_n)}{\mathsf{r}x_j}(a_j) \cdot b
\end{align}
\end{lemma}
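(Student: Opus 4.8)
The plan is to reduce the statement to the already-established fact that the $\mathsf{D}$-linear transpose of the (total) forward derivative is the (total) reverse derivative, namely equation \eqref{eq:11}, by unwinding the definitions of the partial forward derivative, the partial reverse derivative, and the contextual linear dagger, and then tracking how inserting zeroes into the appropriate arguments interacts with these operations. Recall that $\mathsf{D}_j[f]$ is by definition $\mathsf{D}[f]$ precomposed with the map that places its linear input $b$ in the $j$-th slot and zeroes elsewhere, i.e.\ $\mathsf{D}_j[f] = \mathsf{D}[f] \circ \langle \pi_1, \hdots, \pi_n, \langle 0, \hdots, \pi_{n+1}, \hdots, 0 \rangle \rangle$, so that $\mathsf{D}_j[f]$ is $\mathsf{D}$-linear in its last argument; and $\mathsf{R}_j[f] = \pi_j \circ \mathsf{R}[f]$ by Definition \ref{def:partialreverse}.

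First I would compute the left-hand side directly in the term calculus. The inner expression $\tfrac{\mathsf{d}f(a_1,\hdots,x_j,\hdots,a_n)}{\mathsf{d}x_j}(a_j)\cdot u$ is, by the definition of the partial forward derivative, just $\tfrac{\mathsf{d}f(x_1,\hdots,x_n)}{\mathsf{d}(x_1,\hdots,x_n)}(a_1,\hdots,a_n)\cdot(0,\hdots,0,u,0,\hdots,0)$. Taking its reverse derivative in the $u$-variable at $0$ and applying it to $b$, I would like to pull the insertion of zeroes and the projection through. The cleanest route is: view $\tfrac{\mathsf{d}f(\vec x)}{\mathsf{d}\vec x}(\vec a)\cdot(-)$ as a $\mathsf{D}$-linear (hence $k$-linear) map $\mathbb{X}(A_1\times\hdots\times A_n, B)$ in its vector argument, precompose with the $k$-linear insertion map $\iota_j \colon A_j \to A_1\times\hdots\times A_n$ sending $u$ to $(0,\hdots,0,u,0,\hdots,0)$, and observe that the $\mathsf{D}$-linear transpose of a composite of two $\mathsf{D}$-linear maps is the composite of their transposes in the opposite order (the contravariance established in the lemma just before Lemma \ref{lemma:profR}, applied with empty context, or more simply $\mathsf{R}$ of a $k$-linear map composed is handled by \textbf{[RD.5]} plus $k$-linearity). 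The transpose of $\iota_j$ is the projection $\pi_j$, and the transpose of $\tfrac{\mathsf{d}f(\vec x)}{\mathsf{d}\vec x}(\vec a)\cdot(-)$ is $\tfrac{\mathsf{r}f(\vec x)}{\mathsf{r}\vec x}(\vec a)\cdot(-)$ by equation \eqref{eq:11}. Composing these gives exactly $\pi_j\big(\tfrac{\mathsf{r}f(x_1,\hdots,x_n)}{\mathsf{r}(x_1,\hdots,x_n)}(a_1,\hdots,a_n)\cdot b\big)$, which is precisely $\tfrac{\mathsf{r}f(a_1,\hdots,x_j,\hdots,a_n)}{\mathsf{r}x_j}(a_j)\cdot b$ by the definition of the partial reverse derivative, as desired.

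Alternatively, and perhaps more transparently for the term calculus, I would avoid invoking the contravariance lemma and instead argue directly: expand the left-hand side using \eqref{eq:11} backwards to replace the outer $\mathsf{R}$-then-evaluate-at-$0$ by the forward-derivative-of-the-reverse-derivative construction, then use that $\tfrac{\mathsf{d}f(\vec a)}{\mathsf{d}x_j}(-)\cdot u$ depends on $u$ only through the $j$-th slot insertion $\iota_j$, so its reverse derivative in $u$ factors through $\pi_j$ of the reverse derivative of $v \mapsto \tfrac{\mathsf{d}f(\vec x)}{\mathsf{d}\vec x}(\vec a)\cdot v$ — this is \textbf{[RD.5]} together with \textbf{[RD.3]} (the reverse derivative of the $k$-linear map $\iota_j$ is essentially $\pi_j$, in the same pattern as the identity/projection computation \eqref{useful1} in the proof of Lemma \ref{lem:R17context}) — and finally apply \eqref{eq:11} once more in the forward direction to collapse $\tfrac{\mathsf{r}\,\tfrac{\mathsf{d}f(x)}{\mathsf{d}x}(a)\cdot u}{\mathsf{r}u}(0)\cdot(-)$ back to $\tfrac{\mathsf{r}f(x)}{\mathsf{r}x}(a)\cdot(-)$.

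The main obstacle I anticipate is purely bookkeeping: carefully justifying that "the reverse derivative of the precomposition by the $k$-linear insertion $\iota_j$ is projection composed with the reverse derivative," i.e.\ getting the indices and the placement of zeroes exactly right when commuting $\mathsf{R}$ past $\iota_j$ via the reverse chain rule \textbf{[RD.5]} and the reverse derivative of projections/tuples \textbf{[RD.3]}--\textbf{[RD.4]}. There is no conceptual difficulty — this is the same pattern already used to derive \eqref{useful1} — but the multi-variable notation makes it easy to slip. Everything else is a direct substitution of definitions together with the two invocations of the already-proved identity \eqref{eq:11}.
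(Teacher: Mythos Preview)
Your proposal is correct, and both routes you sketch go through. The approach differs from the paper's in a meaningful way, though, so a brief comparison is worthwhile.

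The paper argues more elementarily: it unfolds the inner partial forward derivative into the total forward derivative applied to $(0,\hdots,u,\hdots,0)$, then further unfolds $\mathsf{D}[f]$ via its definition \eqref{RD} as a double reverse derivative, obtaining a \emph{triple} reverse derivative expression in the variables $(s,u,t)$. At that point a single application of \textbf{[RD.6]} in its raw form collapses the three nested reverse derivatives to the single total reverse derivative, and the leftover $\pi_j$ (coming from lifting the partial in $u$ back to a full reverse derivative in $(s,u,t)$) yields exactly $\mathsf{R}_j[f]$. So the paper never invokes the contravariance lemma \eqref{eq:contra} or even \eqref{eq:11} as a packaged statement---just definitions and \textbf{[RD.6]}.

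Your route is more structural: you keep $\mathsf{D}[f]$ abstract, factor $\mathsf{D}_j[f]$ as the $\mathsf{D}$-linear composite $\mathsf{D}[f](\vec a,-)\circ \iota_j$, and then use contravariance of the contextual dagger \eqref{eq:contra} to obtain $\iota_j^{\dagger}\circ \mathsf{D}[f]^{\dagger[\vec a,\_]} = \pi_j \circ \mathsf{R}[f] = \mathsf{R}_j[f]$, invoking \eqref{eq:11} once for the middle factor. This is cleaner and makes the conceptual content visible (transpose turns insertions into projections and forward derivatives into reverse derivatives), at the cost of relying on the contravariance lemma, which itself was proved from \textbf{[RD.5]} and $k$-linearity. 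Your anticipated bookkeeping obstacle---checking that $\iota_j^{\dagger}=\pi_j$ via \textbf{[RD.3]}/\textbf{[RD.4]}---is indeed the only thing left to verify, and it is immediate.
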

\begin{proof} Without loss of generality, for simplicity, we prove this for a map of type $f: C_1 \times A \times C_2 \to B$. So we compute that: 
\begin{gather*}
     \dfrac{\mathsf{r}\dfrac{\mathsf{d}f(c_1, x, c_2)}{\mathsf{d}x}(a) \cdot u}{\mathsf{r}u}(0) \cdot b ~~\substack{= \\ \text{Def.}}~~  \dfrac{\mathsf{r}\dfrac{\mathsf{d}f(v, x, w)}{\mathsf{d}(v,x,w)}(c_1,a,c_2) \cdot (0,u,0)}{\mathsf{r}u}(0) \cdot b  \\
     \substack{= \\ \text{(\ref{RD})}}~~ \text{\small $\dfrac{\mathsf{r}  \dfrac{\mathsf{r} \dfrac{\mathsf{r} f(v,x,w)}{\mathsf{r}(v,x,w)}(c_1,a,c_2) \cdot y  }{\mathsf{r} y}(0) \cdot (0,u,0)   }{\mathsf{r}u}(0) \cdot b ~~\substack{= \\ \text{Def.}}~~ \pi_2 \left( \dfrac{\mathsf{r}  \dfrac{\mathsf{r} \dfrac{\mathsf{r} f(v,x,w)}{\mathsf{r}(v,x,w)}(c_1,a,c_2) \cdot y  }{\mathsf{r} y}(0) \cdot (s,u,t)   }{\mathsf{r}(s,u,t)}(0,0,0) \cdot b \right)$} \\
     \substack{= \\ \text{\textbf{[R.6]}}}~~ \pi_2\left( \dfrac{\mathsf{r} f(v,x,w)}{\mathsf{r}(v,x,w)}(c_1,a,c_2) \cdot b \right) ~~\substack{= \\ \text{Def.}}~~ \dfrac{\mathsf{r} f(c_1,x,c_2)}{\mathsf{r}x}(a) \cdot b
\end{gather*}
So the desired equality holds. \end{proof}

\section{Reverse Faà di Bruno's Formula}

To provide a reverse differentiation version of Faà di Bruno's Formula, we must first work out the appropriate notion of \emph{higher-order reverse derivatives}. Now first observe that given a map $f: A \to B$, reverse deriving it once gives $\mathsf{R}[f]: A \times B \to A$, reverse deriving it again gives $\mathsf{R}^2[f]: A \times B \times A \to A \times B$, then reverse deriving again gives $\mathsf{R}^3[f]: A \times B \times A \times A \times B \to A \times B \times A$, so on and so forth. We quickly see that both the domain and codomain of $\mathsf{R}^n[f]$ expand quite rapidly, and thus $\mathsf{R}^n[f]$ is not necessarily easy to work with. However, it turns out that from Lemma \ref{lemma:tuple} and Lemma \ref{lemma:profR}, there is a lot of redundant information in $\mathsf{R}^n[f]$. For example, we can compute that $\mathsf{R}^2[f]$ is worked out to be: 
\[ \dfrac{\mathsf{r}\dfrac{\mathsf{r}f(x)}{\mathsf{r}x}(y) \cdot z}{\mathsf{r}(y,z)}(a_1,b) \cdot a_2  = \left\langle  \dfrac{\mathsf{r}\dfrac{\mathsf{r}f(x)}{\mathsf{r}x}(a_1) \cdot u}{\mathsf{r}u}(b) \cdot a_2, \dfrac{\mathsf{d}f(x)}{\mathsf{d}x}(a_1) \cdot a_2 \right \rangle \]
So we see that $\mathsf{R}^2[f]$ contains previous known information, since $\mathsf{D}[f]$ is the $\mathsf{D}$-linear transpose of $\mathsf{R}[f]$. So, all the new information comes from taking the partial reverse derivative in the first argument. This is how we get the higher-order reverse derivatives. So starting with a map $f: A \to B$, we again first take its reverse derivative $\mathsf{R}[f]: A \times B \to A$, and then this time only take the partial reverse derivative in the first argument to get $\mathsf{R}_1\left[ \mathsf{R}[f] \right]: A \times B \times A \to A$, then do this again to get $\mathsf{R}_1\left[\mathsf{R}_1\left[ \mathsf{R}[f] \right]\right]: A \times B \times A \times A \to A$, and so on. So after doing this $n+1$ times, we get a map of type $A \times B \times A^{\times^n} \to A$. 

\begin{definition}\label{def:higherreverse} In a Cartesian $k$-reverse differential category, for a map $f: A \to B$, the \textbf{$n+1$-th reverse derivative} of $f$ is the map $\rho^{(n+1)}[f]: A \times B \times A^{\times^n} \to A$ defined inductively as $\rho^{(1)}[f] = \mathsf{R}[f]$ and $\rho^{(n+2)}[f] = \mathsf{R}_1\left[ \rho^{(n+1)}[f] \right]$. 
\end{definition}

In the term calculus, inspired by the notation used in \cite[Sec 3.1]{cockett2011faa} for higher-order forward derivatives, we write higher-order reverse derivatives as follows: 
\[ \rho^{(n+1)}[f](a_0, b, a_2, \hdots, a_{n+1}) = \dfrac{\mathsf{r}^{(n+1)} f(x)}{\mathsf{r}x}(a_0) \cdot b \cdot a_2 \cdot \hdots \cdot a_{n+1} \]
As such, by definition, we have that: 
\begin{align*}
    \dfrac{\mathsf{r}^{(1)} f(x)}{\mathsf{r}x}(a_0) \cdot b &=  \dfrac{\mathsf{r} f(x)}{\mathsf{r}x}(a_0) \cdot b\\
    \dfrac{\mathsf{r}^{(n+2)} f(x)}{\mathsf{r}x}(a_0) \cdot b \cdot a_2 \cdot \hdots \cdot a_{n+1} &= \dfrac{\mathsf{r}\dfrac{\mathsf{r}^{(n+1)} f(x)}{\mathsf{r}x}(y) \cdot b \cdot a_2 \cdot \hdots \cdot a_{n+1}}{\mathsf{r}y}(a_0) \cdot a_{n+2}
\end{align*}

Higher-order forward derivatives are defined similarly. Starting again with a map $f: A \to B$, we can repeatedly derive the first argument to get a map of type $A \times A^{\times^{n+1}} \to B$. So the \textbf{$n+1$-th forward derivative} \cite[Def 3.1]{garner2020cartesian} of $f$ is the map $\partial^{(n+1)}[f]: A \times A^{\times^n} \to B$, which is defined inductively as $\partial^{(1)}[f] = \mathsf{D}[f]$ and $\partial^{(n+2)}[f] = \mathsf{D}_1\left[ \partial^{(n+1)}[f] \right]$. In the term calculus, following the notation used in \cite[Sec 3.1]{cockett2011faa}, we write higher-order forward derivatives as: 
\begin{align*}
   \partial^{(n+1)}[f](a_0, a_1, \hdots, a_{n+1}) \colon = \dfrac{\mathsf{d}^{(n+1)} f(x)}{\mathsf{d}x}(a_0) \cdot a_1 \cdot \hdots \cdot a_{n+1} 
\end{align*}
So, by definition, we have that:
\begin{align*}
    \dfrac{\mathsf{d}^{(1)} f(x)}{\mathsf{d}x}(a_0) \cdot b &=  \dfrac{\mathsf{d} f(x)}{\mathsf{d}x}(a_0) \cdot b\\
    \dfrac{\mathsf{d}^{(n+2)} f(x)}{\mathsf{d}x}(a_0) \cdot a_1 \cdot a_2 \cdot \hdots \cdot a_{n+1} &= \dfrac{\mathsf{d}\dfrac{\mathsf{d}^{(n+1)} f(x)}{\mathsf{d}x}(y) \cdot a_1 \cdot a_2 \cdot \hdots \cdot a_{n+1}}{\mathsf{d}y}(a_0) \cdot a_{n+2}
\end{align*}

Now the $n+1$-th forward derivative is $\mathsf{D}$-linear in each of the last $n+1$ arguments $A$ and is also symmetric in its last $n+1$ arguments \cite[Lemma 3.1.(i)]{garner2020cartesian}. As such, it does not matter which $\mathsf{D}$-linear argument we transpose. So for convenience, to line up with the type higher-order reverse derivative, we will consider the $\mathsf{D}$-linear transpose of the second argument of the higher-order forward derivative: $\partial^{(n+1)}[f]^{\dagger[A \times \_ \times A^n]}: A \times B \times A^{\times^n} \to A$. Now we'd like that $\partial^{(n+1)}[f]^{\dagger[A \times \_ \times A^n]}$ be equal to $\rho^{(n+1)}$. Unfortunately, this does not seem to necessarily follow from just \textbf{[RD.1]}-\textbf{[RD.7]}. As such, we introduce an extra compatibility between the forward differential combinator and the reverse differential combinator. 

Given a map $f: A \to B$, we can take its forward derivative $\mathsf{D}[f]: A \times A \to B$, and then take its reverse derivative in the first argument to get $\mathsf{R}_1\left[ \mathsf{D}[f] \right]: A \times A \times B \to A$. On the other hand, we could have first taken the reverse derivative $\mathsf{R}[f]: A \times B \to A$, and then taken its reverse derivative in the first argument to get $\mathsf{R}_1\left[ \mathsf{R}[f] \right]: A \times B \times A \to A$. Up to swapping the last two arguments, we ask that these be equal. 

\begin{definition}\label{def:stable} A Cartesian $k$-reverse differential category is said to satisfy the \textbf{stable rule} if: 
\begin{align}
    \dfrac{\mathsf{r} \dfrac{\mathsf{d} f(x)}{\mathsf{d}x}(y) \cdot a_2 }{\mathsf{r} y} (a_1) \cdot b = \dfrac{\mathsf{r} \dfrac{\mathsf{r} f(x)}{\mathsf{r}x}(y) \cdot b }{\mathsf{r} y} (a_1) \cdot a_2
\end{align}
\end{definition}

To the best of our understanding at this moment, the stable rule does not seem to follow from the reverse differential combinator axioms -- though we do not have a proper counter-example at present. The stable rule is a desirable coherence that should hold, and all natural examples of Cartesian reverse differential categories satisfy the stable rule. In future work, it may be worth revisiting the axioms of the reverse differential combinator to include the stable rule, which we conjecture is related to taking multiple linear transposes of multi-$\mathsf{D}$-linear maps.  

\begin{example}\label{ex:stablesmooth} Let us show that smooth functions satisfy the stable rule. For a smooth function $F= \langle f_1, \hdots, f_m \rangle: \mathbb{R}^n \to \mathbb{R}^m$, recall that $\mathsf{R}[F]: \mathbb{R}^n \times \mathbb{R}^m \to \mathbb{R}^n$ and ${\mathsf{D}[F]: \mathbb{R}^n \times \mathbb{R}^n \to \mathbb{R}^m}$ are: 
\[ \mathsf{R}[F](\vec x, \vec y) = \left( \mathsf{R}_1[F](\vec x, \vec y), \hdots, \mathsf{R}_n[F](\vec x, \vec y) \right) = \left(  \sum \limits^m_{j=1} \frac{\partial f_j}{\partial x_1}(\vec x) y_j, \hdots, \sum \limits^m_{j=1} \frac{\partial f_j}{\partial x_n}(\vec x) y_j \right)  \]
\[ \mathsf{D}[F](\vec x, \vec z) = \left( \mathsf{D}[f_1](\vec x, \vec z), \hdots,  \mathsf{D}[f_m](\vec x, \vec z) \right) = \left( \sum\limits^n_{i=1} \dfrac{\partial f_1}{\partial x_i}(\vec x) z_i, \hdots, \sum\limits^n_{i=1} \dfrac{\partial f_m}{\partial x_i}(\vec x) z_i \right) \]
On the other hand, for a smooth function $G = \langle g_1, \hdots, g_k \rangle: \mathbb{R}^n \times \mathbb{R}^m \to \mathbb{R}^k$, its partial reverse derivative in its first argument $\mathbb{R}^n$ is $\mathsf{R}_1[G]: \mathbb{R}^n \times \mathbb{R}^m \times \mathbb{R}^k \to \mathbb{R}^n$ given as follows:
\begin{align*}
  \mathsf{R}_1[G](\vec x, \vec y, \vec z) = \left (  \sum \limits^k_{j=1} \dfrac{\partial g_j}{\partial x_1}(\vec x, \vec y) z_j, \hdots,\sum \limits^k_{j=1} \dfrac{\partial g_j}{\partial x_n}(\vec x,\vec y) z_j \right )  
\end{align*}
We need to show that $\mathsf{R}_1\left[ \mathsf{D}[F] \right]: \mathbb{R}^n \times \mathbb{R}^n \times \mathbb{R}^m \to \mathbb{R}^n$ and $\mathsf{R}_1\left[ \mathsf{R}[F] \right]: \mathbb{R}^n \times \mathbb{R}^m \times \mathbb{R}^n \to \mathbb{R}^n$ are equal up to permutation of the second and third arguments. We first compute $\mathsf{R}_1\left[ \mathsf{D}[F] \right]$ to be:
\begin{align*}
    \mathsf{R}_1\left[ \mathsf{D}[F] \right](\vec x, \vec z, \vec y) &=~ \left (  \sum \limits^m_{j=1} \dfrac{\partial \mathsf{D}[f_j]}{\partial x_1}(\vec x, \vec z) y_j, \hdots,\sum \limits^m_{j=1} \dfrac{\partial \mathsf{D}[f_j]}{\partial x_n}(\vec x,\vec z) y_j \right ) \\
    &=~ \left ( \sum \limits^m_{j=1} \sum \limits^n_{i=1} \dfrac{\partial^2 f_j}{\partial x_i \partial x_1}(\vec x) z_i y_j, \hdots,\sum \limits^m_{j=1} \sum \limits^n_{i=1} \dfrac{\partial^2 f_j}{\partial x_i \partial x_n}(\vec x) z_i y_j \right)
\end{align*}
On the other hand, we compute $\mathsf{R}_1\left[ \mathsf{R}[F] \right]$ to be:
\begin{align*}
    \mathsf{R}_1\left[ \mathsf{R}[F] \right](\vec x, \vec y, \vec z) &=~ \left(  \sum \limits^n_{i=1} \dfrac{\partial \mathsf{R}_i[F]}{\partial x_1}(\vec x) z_i, \hdots, \sum \limits^n_{i=1} \dfrac{\partial \mathsf{R}_i[F]}{\partial x_n}(\vec x) z_i \right) \\
    &=~ \left (  \sum \limits^n_{i=1} \sum \limits^m_{j=1} \dfrac{\partial^2 f_j}{\partial x_i \partial x_1}(\vec x) y_j z_i, \hdots, \sum \limits^n_{i=1} \sum \limits^m_{j=1} \dfrac{\partial^2 f_j}{\partial x_i \partial x_n}(\vec x) y_j z_i \right) 
\end{align*}
So we have that $\mathsf{R}_1\left[ \mathsf{D}[F] \right](\vec x, \vec z, \vec y) = \mathsf{R}_1\left[ \mathsf{R}[F] \right](\vec x, \vec y, \vec z)$, and thus the stable rule holds in $\mathsf{SMOOTH}$. We note that the same proof works for polynomials over any commutative semiring. 
\end{example}

With the stable rule, we can show that $\mathsf{D}$-linear transpose of the higher-order forward derivative is the higher-order reverse derivative. For convenience, by convention we define $\rho^{(0)}[f] = f = \partial^{(0)}[f]$.  

\begin{lemma}\label{lemma:Ddag} In a Cartesian $k$-reverse differential category whose reverse differential combinator satisfies the stable rule, we have that $\partial^{(n+1)}[f]^{\dagger[A \times \_ \times A^n]} = \rho^{(n+1)}[f]$, which in the term calculus is expressed as: 
\begin{align}\label{hightranspose}
   \dfrac{\mathsf{r}  \dfrac{\mathsf{d}^{(n+1)} f(x)}{\mathsf{d}x}(a_0) \cdot y \cdot a_2 \cdot \hdots \cdot a_{n+1} }{\mathsf{r}y}(0) \cdot b = \dfrac{\mathsf{r}^{(n+1)} f(x)}{\mathsf{r}x}(a_0) \cdot b \cdot a_2 \cdot \hdots \cdot a_{n+1}  
\end{align}
\end{lemma}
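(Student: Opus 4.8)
The plan is to argue by induction on $n$. The base case $n = 0$ is the statement $\mathsf{D}[f]^{\dagger[A \times \_]} = \mathsf{R}[f]$, which is exactly equation (\ref{eq:11}) (equivalently \textbf{[RD.6]}); here the convention $\rho^{(0)}[f] = f = \partial^{(0)}[f]$ is what makes the auxiliary maps used in the inductive step meaningful already at $n = 0$. Throughout I would work in the term calculus and freely use the recursive definitions of $\partial^{(n+2)}[f]$ and $\rho^{(n+2)}[f]$, the symmetry of the higher-order forward derivative in its direction arguments (iterated \textbf{[CD.7]}), the identity (\ref{eq:11}) together with its consequence $\mathsf{D}[g]^{\dagger[A \times \_]} = \mathsf{R}[g]$, and the stable rule.

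Assuming (\ref{hightranspose}) for $n$, I would transform the left-hand side for $n+1$ as follows. Using $\partial^{(n+2)}[f] = \mathsf{D}_1[\partial^{(n+1)}[f]]$ and symmetry of $\partial^{(n+2)}[f]$ in its $n+2$ direction arguments, move the ``new'' direction $y$ into the outermost slot, rewriting the inner forward-derivative term as $\dfrac{\mathsf{d}G(w)}{\mathsf{d}w}(a_0) \cdot y$ with $G(w) := \partial^{(n+1)}[f](w, a_2, \hdots, a_{n+2})$; then (\ref{eq:11}) collapses the whole left-hand side to $\dfrac{\mathsf{r}G(w)}{\mathsf{r}w}(a_0) \cdot b$. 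A second use of the recursion, peeling off $a_{n+2}$ as the outermost direction, gives $G(w) = \mathsf{D}[g](w, a_{n+2})$ for $g(v) := \partial^{(n)}[f](v, a_2, \hdots, a_{n+1})$. Applying the stable rule to $g$ then yields $\dfrac{\mathsf{r}\mathsf{D}[g](w, a_{n+2})}{\mathsf{r}w}(a_0) \cdot b = \dfrac{\mathsf{r}\mathsf{R}[g](w, b)}{\mathsf{r}w}(a_0) \cdot a_{n+2}$.

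It remains to identify $\mathsf{R}[g]$. By (\ref{eq:11}), $\mathsf{R}[g](w, b) = \mathsf{D}[g]^{\dagger[A \times \_]}(w, b)$, and since $\mathsf{D}[g](w, z) = \partial^{(n+1)}[f](w, a_2, \hdots, a_{n+1}, z)$ by the recursion, transposing and using symmetry of $\partial^{(n+1)}[f]$ to return $z$ to the first direction slot shows $\mathsf{R}[g](w, b) = \partial^{(n+1)}[f]^{\dagger[A \times \_ \times A^n]}(w, b, a_2, \hdots, a_{n+1})$, which the inductive hypothesis identifies with $\rho^{(n+1)}[f](w, b, a_2, \hdots, a_{n+1})$. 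Substituting back, the right-hand side of the stable-rule equality becomes $\dfrac{\mathsf{r}\left( \rho^{(n+1)}[f](w, b, a_2, \hdots, a_{n+1}) \right)}{\mathsf{r}w}(a_0) \cdot a_{n+2}$, which is $\rho^{(n+2)}[f]$ by its recursive definition. Chaining these equalities gives (\ref{hightranspose}) for $n+1$.

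The main obstacle is bookkeeping rather than insight: one must carefully track which slot is a base point and which are directions, permute directions via symmetry of the higher-order forward derivative so the inductive hypothesis applies, and notice that after collapsing via (\ref{eq:11}) the surviving term has exactly the form ``reverse derivative, in the base point, of a forward derivative'' --- which is where the stable rule is needed. One should also note that the auxiliary map $g$ carries $a_2, \hdots, a_{n+1}$ in context, so (\ref{eq:11}) and the stable rule are used in their contextual forms: the former holds in every simple slice category since these are again Cartesian $k$-reverse differential categories (the corollary following Lemma \ref{lem:R17context}), and the latter likewise passes to simple slices by the same ``useful identity'' technique. Notably, no coherence beyond the stable rule is required --- the single stable-rule step drives the entire higher-order induction.
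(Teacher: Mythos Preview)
Your proof is correct and follows essentially the same strategy as the paper's: both argue by induction, with the stable rule in context as the key step, supplemented by (\ref{eq:11}) and the symmetry of the higher-order forward derivative in its direction arguments. The only difference is organizational --- the paper first isolates the auxiliary identity (\ref{eq:induc}) (namely $\mathsf{R}_1[\partial^{(n)}[f]] = \rho^{(n+1)}[f]$) and proves \emph{that} by induction before applying (\ref{eq:11}) once at the end, whereas you induct directly on (\ref{hightranspose}) and invoke (\ref{eq:11}) and symmetry inside each inductive step; the mathematical content is identical.
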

\begin{proof} One can show that the stable rule also holds in context, that is: 
\begin{align}
    \dfrac{\mathsf{r} \dfrac{\mathsf{d} g(c_1,x,c_2)}{x}(y) \cdot a_2 }{\mathsf{r} y} (a_1) \cdot b = \dfrac{\mathsf{r} \dfrac{\mathsf{r} g(c_1,x,c_2)}{x}(y) \cdot b }{\mathsf{d} y} (a_1) \cdot a_2
\end{align}
Now note that the stable rule can be rewritten as: 
\begin{align}
    \dfrac{\mathsf{r} \dfrac{\mathsf{d} f(x)}{x}(y) \cdot a_2 }{\mathsf{r} y} (a_1) \cdot b = \dfrac{\mathsf{r}^{(2)} f(x)}{\mathsf{d} (x)} (a_1) \cdot b \cdot \cdot a_2
\end{align}
Since the stable rule holds in context, it is straightforward to show by induction that:
\begin{align}\label{eq:induc}
    \dfrac{\mathsf{r} \dfrac{\mathsf{d}^{(n)} f(x)}{x}(y) \cdot a_1 \cdot \hdots \cdot a_{n}}{\mathsf{r} y} (a_0) \cdot b = \dfrac{\mathsf{r}^{(n+1)} f(x)}{\mathsf{r}x}(a_0) \cdot b \cdot a_2 \cdot \hdots \cdot a_{n+1}  
\end{align}
Then applying (\ref{eq:11}) to the left-hand side gives us precisely that: 
\begin{gather*}
   \dfrac{\mathsf{r}  \dfrac{\mathsf{d}^{(n+1)} f(x)}{\mathsf{d}x}(a_0) \cdot y \cdot a_2 \cdot \hdots \cdot a_{n+1} }{\mathsf{r}y}(0) \cdot b ~~\substack{= \\ \text{Def.}}~~  
   \dfrac{\mathsf{r}    \dfrac{\mathsf{d} \dfrac{\mathsf{d}^{(n)} f(x)}{x}(y) \cdot a_1 \cdot \hdots \cdot a_{n}}{\mathsf{d} y} (a_0) \cdot y }{\mathsf{r}y}(0) \cdot b \\
   \substack{= \\ \text{(\ref{eq:11})}}~~   \dfrac{\mathsf{r} \dfrac{\mathsf{d}^{(n)} f(x)}{x}(y) \cdot a_1 \cdot \hdots \cdot a_{n}}{\mathsf{r} y} (a_0) \cdot b ~~\substack{= \\ \text{(\ref{eq:induc})}}~~ \dfrac{\mathsf{r}^{(n+1)} f(x)}{\mathsf{r}x}(a_0) \cdot b \cdot a_2 \cdot \hdots \cdot a_{n+1}  
\end{gather*}
So we get that $\partial^{(n+1)}[f]^{\dagger[A \times \_ \times A^n]} = \rho^{(n+1)}[f]$ as desired. 
\end{proof}

We may finally look towards understanding the reverse Faà di Bruno's Formula. Let us first review Faà di Bruno's Formula for forward differentiation in a Cartesian differential category. To do so, we first introduce some notation. For every $n \in \mathbb{N}$, define the well-ordered set $[n+1]=\lbrace 1< \hdots < n+1 \rbrace$. Now for every subset $I = \lbrace i_1 < \hdots<  i_m \rbrace \subseteq [n+1]$, for a vector $\vec x = (x_1, \hdots, x_{n+1})$, define $x_I = (x_{i_1}, \hdots, x_{i_m})$. Lastly, we denote a \emph{non-empty} partition of $[n+1]$ as $[n+1] = S_1 \vert \hdots \vert S_k$, and let $\vert S_j \vert$ be the cardinality of $S_j$. Then Faà di Bruno's Formula \cite[Lemma 3.4]{garner2020cartesian} for the $n+1$-th derivative is given as a sum over the non-empty partitions of $[n+1]$ as follows:  
\begin{equation}\begin{gathered}\label{faa}
\dfrac{\mathsf{d}^{(n+1)} g(f(x))}{\mathsf{d}x}(a_0) \cdot a_1 \cdot a_2 \cdot \hdots \cdot a_{n+1} \\
= \sum \limits_{ [n+1]=S_1 \vert \hdots \vert S_k}  \dfrac{\mathsf{d}^{(k)} g(z)}{\mathsf{d}z}(f(a_0)) \cdot \left( \dfrac{\mathsf{d}^{\left(  \vert S_1 \vert  \right)} f(x) }{\mathsf{d}x} (a_0) \cdot a_{S_1} \right) \cdot \hdots \cdot  \left( \dfrac{\mathsf{d}^{\left(  \vert S_k \vert  \right)} f(x) }{\mathsf{d}x} (a_0) \cdot a_{S_k} \right)
\end{gathered}
\end{equation}

We are finally in a position to work out the reverse Faà di Bruno's Formula. For convenience, we will assume, without loss of generality, that for each non-empty partition $[n+1] = S_1 \vert \hdots \vert S_k$, that $1 \in [n+1]$ is always in $1 \in S_1$. As a shorthand, we write $\widehat{S_1} = S_1 - \lbrace 1 \rbrace$. 

\begin{theorem}\label{Prop:faa} In a Cartesian $k$-reverse differential category whose reverse differential combinator satisfies the stable rule, the following equality holds: 
\begin{equation}\begin{gathered}
        \dfrac{\mathsf{r}^{(n+1)} g(f(x))}{\mathsf{r}x}(a_0) \cdot b \cdot a_2 \cdot \hdots \cdot a_{n+1} = \\ 
 \sum \limits_{ \substack{[n+1]=S_1 \vert \hdots \vert S_k \\ 1 \in A_1}} \!\!\!\! \!\!\!\!\!\!\!\!\dfrac{\mathsf{r}^{\left(  \vert S_1 \vert  \right)} f(x) }{\mathsf{r}x}(a_0) \cdot \left(  \dfrac{\mathsf{r}^{(k)} g(y)}{\mathsf{r}y}(a_0) \cdot b \cdot\!\! \left( \dfrac{\mathsf{d}^{\left(  \vert S_2 \vert  \right)} f(x) }{\mathsf{d}x} (a_0) \cdot a_{S_2} \right) \cdot \hdots \cdot \left( \dfrac{\mathsf{d}^{\left(  \vert S_k \vert  \right)} f(x) }{\mathsf{d}x} (a_0) \cdot a_{S_k} \right) \!\!\right) \cdot a_{\widehat{S_1}}   
\end{gathered}
\end{equation}
\end{theorem}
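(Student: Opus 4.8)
The plan is to derive the reverse formula by taking the $\mathsf{D}$-linear transpose in the second argument of the forward Faà di Bruno Formula (\ref{faa}) and simplifying both sides with the tools already in hand, exactly mirroring the way the reverse chain rule \textbf{[RD.5]} arises as the transpose of the forward chain rule.

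First I would read the left-hand side of (\ref{faa}) as a function of $(a_0,a_1,\hdots,a_{n+1})$: this is $\partial^{(n+1)}[g\circ f]$, which is $\mathsf{D}$-linear in its $a_1$-slot. Taking the $\mathsf{D}$-linear transpose in that slot (with $a_0$ and $a_2,\hdots,a_{n+1}$ as context) and applying Lemma \ref{lemma:Ddag} to $g\circ f$ turns the left-hand side into $\rho^{(n+1)}[g\circ f]$, i.e. $\dfrac{\mathsf{r}^{(n+1)} g(f(x))}{\mathsf{r}x}(a_0)\cdot b\cdot a_2\cdot\hdots\cdot a_{n+1}$, the left-hand side of the theorem. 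For the right-hand side, I would note that the $\mathsf{D}$-linear transpose in a fixed slot is the partial reverse derivative at $0$ (Lemma \ref{lemma:partialtranspose}), which is $k$-linear in the map being transposed by \textbf{[RD.1]} together with $k$-linearity of the projections; hence the transpose commutes with the finite sum over partitions, and it suffices to transpose each summand separately.

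Fix a non-empty partition $[n+1]=S_1\vert\hdots\vert S_k$ with $1\in S_1$. Since $1=\min[n+1]=\min S_1$, the variable $a_1$ occurs in the corresponding summand of (\ref{faa}) exactly once, as the first entry of $a_{S_1}$ inside the bracket $\dfrac{\mathsf{d}^{(\vert S_1\vert)} f(x)}{\mathsf{d}x}(a_0)\cdot a_{S_1}$; everything else ($a_0$, the entries $a_{\widehat{S_1}}$, and the brackets for $S_2,\hdots,S_k$) is context. Thus the summand is a composite in context of the map $\phi\colon a_1\mapsto \dfrac{\mathsf{d}^{(\vert S_1\vert)} f(x)}{\mathsf{d}x}(a_0)\cdot a_1\cdot a_{\widehat{S_1}}=\partial^{(\vert S_1\vert)}[f](a_0,a_1,a_{\widehat{S_1}})$, which is $\mathsf{D}$-linear in $a_1$, followed by the map $\psi\colon w\mapsto \dfrac{\mathsf{d}^{(k)} g(z)}{\mathsf{d}z}(f(a_0))\cdot w\cdot\left(\dfrac{\mathsf{d}^{(\vert S_2\vert)} f(x)}{\mathsf{d}x}(a_0)\cdot a_{S_2}\right)\cdot\hdots\cdot\left(\dfrac{\mathsf{d}^{(\vert S_k\vert)} f(x)}{\mathsf{d}x}(a_0)\cdot a_{S_k}\right)$, which is $\mathsf{D}$-linear in $w$ because higher-order forward derivatives are $\mathsf{D}$-linear in each of their last arguments; note that in both $\phi$ and $\psi$ the slot being transposed is literally the second argument, so Lemma \ref{lemma:Ddag} applies without invoking symmetry.

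Now I would apply the contravariance of the $\mathsf{D}$-linear transpose in context (\ref{eq:contra}) to $\psi\circ\phi$, which shows the transposed summand sends $b$ to $\phi^{\dagger}\!\left(\psi^{\dagger}(b)\right)$. By Lemma \ref{lemma:Ddag}, $\psi^{\dagger}(b)=\dfrac{\mathsf{r}^{(k)} g(y)}{\mathsf{r}y}(f(a_0))\cdot b\cdot\left(\dfrac{\mathsf{d}^{(\vert S_2\vert)} f(x)}{\mathsf{d}x}(a_0)\cdot a_{S_2}\right)\cdot\hdots\cdot\left(\dfrac{\mathsf{d}^{(\vert S_k\vert)} f(x)}{\mathsf{d}x}(a_0)\cdot a_{S_k}\right)$ and $\phi^{\dagger}(w)=\dfrac{\mathsf{r}^{(\vert S_1\vert)} f(x)}{\mathsf{r}x}(a_0)\cdot w\cdot a_{\widehat{S_1}}$, and substituting the former into the latter reproduces precisely the $S_1\vert\hdots\vert S_k$ summand of the claimed right-hand side (with basepoint $f(a_0)$ in the $g$-factor). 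Summing over all partitions with $1\in S_1$ then finishes the proof. The categorical content is carried entirely by Lemma \ref{lemma:Ddag} (where the stable rule enters) and by the contravariance identity (\ref{eq:contra}); I expect the one genuinely delicate point to be the index bookkeeping — verifying that the convention $1\in S_1$ forces $a_1$ into exactly one bracket and into the first position of that bracket, so that each summand splits as a single instance of (\ref{eq:contra}) with all remaining data as context, and that the transpose genuinely commutes with the sum over partitions.
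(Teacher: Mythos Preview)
Your proof is correct and follows essentially the same route as the paper: both start from Lemma~\ref{lemma:Ddag} to rewrite $\rho^{(n+1)}[g\circ f]$ as the transpose of $\partial^{(n+1)}[g\circ f]$, expand via the forward Fa\`a di Bruno formula~(\ref{faa}), push the transpose through the sum using \textbf{[RD.1]}, apply the contravariance identity~(\ref{eq:contra}) to each summand, and then use Lemma~\ref{lemma:Ddag} again on the two resulting factors. Your write-up is in fact a bit more careful than the paper's on two points---you explicitly name the inner and outer factors $\phi,\psi$ and check that the slot being transposed is the second argument in each, and you flag (correctly) that the basepoint in the $g$-factor should be $f(a_0)$ rather than the $a_0$ that appears in the displayed statement.
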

\begin{proof} We compute: 
 \begin{gather*}
   \dfrac{\mathsf{r}^{(n+1)} g(f(x))}{\mathsf{r}x}(a_0) \cdot b \cdot a_2 \cdot \hdots \cdot a_{n+1} ~~\substack{= \\ \text{(\ref{hightranspose})}}~~ \dfrac{\mathsf{r}  \dfrac{\mathsf{d}^{(n+1)} g(f(x))}{\mathsf{d}x}(a_0) \cdot y \cdot a_2 \cdot \hdots \cdot a_{n+1} }{\mathsf{r}y}(0) \cdot b \\
\substack{= \\ \text{(\ref{faa})}}~~  \dfrac{\mathsf{r}  \left(\sum \limits_{ [n+1]=S_1 \vert \hdots \vert S_k}  \dfrac{\mathsf{d}^{(k)} g(z)}{\mathsf{d}z}(f(a_0)) \cdot \left( \dfrac{\mathsf{d}^{\left(  \vert S_1 \vert  \right)} f(x) }{\mathsf{d}x} (a_0) \cdot  y \cdot a_{S_1 - \lbrace 1 \rbrace} \right) \cdot \hdots \cdot  \left( \dfrac{\mathsf{d}^{\left(  \vert S_k \vert  \right)} f(x) }{\mathsf{d}x} (a_0) \cdot a_{S_k} \right) \right)  }{\mathsf{r}y}(0) \cdot b \\
\substack{= \\ \text{\textbf{[R.2]}}}~~  \sum \limits_{ [n+1]=S_1 \vert \hdots \vert S_k}  \!\!\!\! \!\!\!\!\!\!\!\! \dfrac{\mathsf{r}  \dfrac{\mathsf{d}^{(k)} g(z)}{\mathsf{d}z}(f(a_0)) \cdot \left( \dfrac{\mathsf{d}^{\left(  \vert S_1 \vert  \right)} f(x) }{\mathsf{d}x} (a_0) \cdot y \cdot a_{S_1 - \lbrace 1 \rbrace} \right) \cdot \hdots \cdot  \left( \dfrac{\mathsf{d}^{\left(  \vert S_k \vert  \right)} f(x) }{\mathsf{d}x} (a_0) \cdot a_{S_k} \right)  }{\mathsf{r}y}(0) \cdot b \\
 \substack{= \\ \text{(\ref{eq:contra})}} \sum \limits_{ \substack{[n+1]=S_1 \vert \hdots \vert S_k \\ 1 \in A_1}} \!\!\!\! \!\!\!\!\!\!\!\!  \dfrac{\mathsf{r} \dfrac{\mathsf{d}^{\left(  \vert S_1 \vert  \right)} f(x) }{\mathsf{d}x} (a_0) \cdot y \cdot a_{S_1 - \lbrace 1 \rbrace}}{\mathsf{r} y} (0) \cdot \left(   \dfrac{\mathsf{r}  \dfrac{\mathsf{d}^{(k)} g(z)}{\mathsf{d}z}(f(a_0)) \cdot z \cdot \hdots \cdot  \left( \dfrac{\mathsf{d}^{\left(  \vert S_k \vert  \right)} f(x) }{\mathsf{d}x} (a_0) \cdot a_{S_k} \right)  }{\mathsf{r}z}\left( 0 \right) \cdot b \right) \\
 \substack{= \\ \text{(\ref{hightranspose})}}  \sum \limits_{ \substack{[n+1]=S_1 \vert \hdots \vert S_k \\ 1 \in A_1}} \!\!\!\! \!\!\!\!\!\!\!\!  \dfrac{\mathsf{r}^{\left(  \vert S_1 \vert  \right)} f(x) }{\mathsf{r}x}(a_0) \cdot  \left(  \dfrac{\mathsf{r}^{(k)} g(y)}{\mathsf{r}y}(a_0) \cdot b \cdot \left( \dfrac{\mathsf{d}^{\left(  \vert S_2 \vert  \right)} f(x) }{\mathsf{d}x} (a_0) \cdot a_{S_2} \right) \cdot \hdots \cdot \left( \dfrac{\mathsf{d}^{\left(  \vert S_k \vert  \right)} f(x) }{\mathsf{d}x} (a_0) \cdot a_{S_k} \right)\!\! \right) \cdot a_{\widehat{S_1}}
\end{gather*}
So, the reverse of Faà di Bruno's Formula holds as desired. 
\end{proof}

When $n=0$ in the reverse Faà di Bruno's Formula, we get back precisely the reverse chain rule. Indeed, the only non-empty partition of $[1]$ is $[1] = S_1= \lbrace 1 \rbrace$, which is why no forward derivatives appear in the reverse chain rule. When $n=1$, the non-empty partitions of $[2]$ are $[1] = S_1= \lbrace 1 \rbrace\vert S_2=\lbrace 2 \rbrace$ and $[2] = S_1 =\lbrace 1, 2 \rbrace$. So the reverse Faà di Bruno's Formula for the second reverse derivative is:
\begin{gather*}
   \dfrac{\mathsf{r}^{(2)}g\left(f(x) \right)}{\mathsf{r}x}(a_0) \cdot b \cdot a_2 \\
   = \dfrac{\mathsf{r}^{(1)}f(x)}{\mathsf{r}x}(a_0) \cdot \left( \dfrac{\mathsf{r}^{(2)}g(y)}{\mathsf{r}y}(f(a_0)) \cdot b \cdot \left(\dfrac{\mathsf{d} f(x) }{\mathsf{d}x} (a_0) \cdot  a_2 \right) \right) + \dfrac{\mathsf{r}^{(2)}f(x)}{\mathsf{r}x}(a_0) \cdot \left( \dfrac{\mathsf{r}^{(1)}g(y)}{\mathsf{r}y}(f(a_0)) \cdot b  \right) \cdot a_2
\end{gather*}
The reverse Faà di Bruno's Formula for the third reverse derivative will involve five summands,  the formula for the fourth reverse derivative will involve fifteen summands, etc. 

\newpage 
\bibliographystyle{eptcs}
\bibliography{ACT}

\end{document}